\documentclass[copyright,creativecommons]{eptcs}


\usepackage{amsmath, amssymb, latexsym, amsthm}
\usepackage{multirow}
\usepackage{framed}
\usepackage{color}
\usepackage{tikz}
\usepackage{verbatim}
\usetikzlibrary{patterns}
\usepackage{sidecap}
\usetikzlibrary{decorations.markings}
\usepackage{tikz-cd}
\usetikzlibrary{arrows}
 
\usepackage{booktabs, tabularx}

\usepackage{stmaryrd}

\usepackage{wasysym}
 
\usepackage{hyperref}

\usepackage{eucal}

\usepackage{mathtools}
\mathtoolsset{centercolon}

\providecommand{\customgenericname}{}
\newcommand{\newcustomtheorem}[2]{%
  \newenvironment{#1}[1]
  {%
   \renewcommand\customgenericname{#2}%
   \renewcommand\theinnercustomgeneric{##1}%
   \innercustomgeneric
  }
  {\endinnercustomgeneric}
}

\newcustomtheorem{customthm}{Theorem}
\newcustomtheorem{customlemma}{Lemma}

\newtheorem{thm}{Theorem}

\newtheorem{theorem}[thm]{Theorem}
\newtheorem{corollary}[thm]{Corollary}
\newtheorem{lemma}[thm]{Lemma}

\newtheorem{example}{Example}
\newtheorem{observation}{Observation}



\newcommand{\EK}{\mathsf{EK}}
\newcommand{\EKB}{\mathsf{EKB}}
\newcommand{\EKK}{\mathsf{EKK}}


\newcommand{\Model}{(X, \cE, I, v)}   

\newcommand{\IModel}{(X, \cE, \oplus, I, v)}   


\newcommand{\cE}{\mathcal{E}}

\newcommand{\cL}{\mathcal{L}}
\newcommand{\cM}{\mathcal{M}}

\newcommand{\cS}{\mathcal{S}}

\newcommand{\cX}{\mathcal{X}}

\newcommand{\aK}{\Box}


\newcommand{\imp}{\rightarrow}

\newcommand{\M}{\hat{K}}
\newcommand{\aM}{\Diamond}
\newcommand{\MB}{\hat{B}}
\newcommand{\ME}{\hat{E}}
\renewcommand{\phi}{\varphi}

\newcommand{\diaast}{\raisebox{-.3ex}{\scalebox{1}{\rotatebox{45}{$\boxast$}}}}


\usepackage{amssymb,tikz}

%


\newcommand{\draft}[1]{{\color{red}[\textsc{#1}]}}
\newcommand{\defin}[1]{\textbf{#1}}

\newcommand{\lthen}{\rightarrow}
\newcommand{\liff}{\leftrightarrow}
\newcommand{\falsum}{\bot}

\newcommand{\proves}{\vdash}
\newcommand{\defeq}{\coloneqq}

\newcommand{\val}[1]{[\![ #1 ]\!]}

\newcommand{\sval}[1]{\| #1 \|}
\newcommand{\aval}[1]{[\kern-0.25em( #1 )\kern-0.25em]}

\renewcommand{\phi}{\varphi}
\newcommand{\rimp}{\Rightarrow}
\newcommand{\dimp}{\Leftrightarrow}

\newcommand{\commentout}[1]{}

\renewcommand{\L}{\mathcal{L}}

\newcommand{\amods}{\mathrel{\kern.2em|\kern-0.2em{\approx}\kern.2em}}
\newcommand{\notamods}{\mathrel{\kern.2em|\kern-0.2em{\not\approx}\kern.2em}}

\newcommand{\shortv}[1]{}

\title{Uncertainty About Evidence \\  {\small Extended Abstract}}

\author{
Adam Bjorndahl
\institute{Carnegie Mellon University\\
      Pittsburgh, USA}
	\email{abjorn@andrew.cmu.edu}
	\and
Ayb\"{u}ke \"{O}zg\"{u}n
\institute{ILLC, University of Amsterdam \\ Amsterdam, the Netherlands \\
Arch\'{e}, University of St.~Andrews \\ St.~Andrews, Scotland}
	\email{a.ozgun@uva.nl}
}

\begin{document}

\maketitle

\begin{abstract}
We develop a logical framework for reasoning about knowledge and evidence in which the agent may be uncertain about how to interpret their evidence. 
Rather than representing an evidential state as a fixed subset of the state space, our models allow the set of possible worlds that a piece of evidence corresponds to to vary from one possible world to another, and therefore itself be the subject of uncertainty. 
Such structures can be viewed as (epistemically motivated) generalizations of topological spaces. In this context, there arises a natural distinction between what is \textit{actually} entailed by the evidence and what the agent \textit{knows} is entailed by the evidence---with the latter, in general, being much weaker. We provide a sound and complete axiomatization of the corresponding bi-modal logic of knowledge and evidence entailment, and investigate some natural extensions of this core system, including the addition of a belief modality and its interaction with evidence interpretation and entailment, and the addition of a ``knowability'' modality interpreted via a (generalized) interior operator. 
\end{abstract}

\maketitle

\section{Introduction}


In everyday speech, when we claim, say, that the grass being wet is evidence for its having rained recently, the intended meaning seems to be that seeing the wet grass provides some sort of partial, imperfect, defeasible reason to believe or consider it more likely that it rained recently. In sharp contrast to this, many formal models of information update interpret evidence as being essentially infallible or factive. 
Standard Bayesian updating, for example, tells us that to update a belief (i.e., a probability measure $\pi$) on the basis of an observation (i.e., a subset $E$ of the background state space, the ``evidence''), we should condition $\pi$ on $E$, after which $E$ is assigned probability $1$
(see, e.g., \cite[Chapter 3]{Halpern03}).
For another example,
AGM-style belief revision updates an initial state of knowledge/belief (captured by a set of formulas) on the basis of some new information (i.e., a particular formula $\phi$, the ``evidence'') to produce a new set of formulas that always contains the input formula 
$\phi$ \cite{AGM85,Grove88}. And a variety of logical models for evidence and belief update assume that each piece of evidence corresponds to a set of possible worlds $U$ and entails exactly those propositions $\phi$ such that $U \subseteq \val{\phi}$ \cite{MossP92,DMossP96,MossPS07,vBP11,vBDP12,vBDP14}. Of special note are those models in which the collection of evidence is assumed to take the structure of a \emph{topology} \cite{BOVS17,BO17,Bjorndahl16,BBOS16,Ozgun17}; the framework we propose can be viewed as a generalization of this paradigm.

In this paper we develop a logic for reasoning about evidence that is founded on distinguishing what a given piece of evidence \textit{actually} entails from what it is \textit{believed} (perhaps erroneously) to entail. Thus, in our models, evidence is factive but its \textit{interpretation} can be uncertain. This is accomplished by allowing the set of possible worlds that a given piece of evidence corresponds to to vary across possible worlds, and therefore itself be the subject of uncertainty. Viewed as a generalization of topology, roughly speaking this corresponds to replacing each individual open set with a parametrized family of sets (with the parameter taken from the underlying space itself).

This paves the way for a natural representation of several closely related phenomena including calibration error (where an agent receives a signal from a measurement device but is uncertain or mistaken about how that signal is related to the measured quantity), evidence ``introspection'' failure (where the agent in fact has evidence for $\phi$ but lacks evidence \textit{that} their evidence entails $\phi$), and uncertain margins of error (where the agent has taken a measurement with a certain margin of error, but is unsure what exactly that margin is).

This paper is organized as follows. In the next section we briefly review some of the syntax and notation we will rely on in the rest of the paper. Section \ref{section:evidence-knowledge} introduces \emph{evidence models}, motivating the definitions with intuitions and examples
(Examples \ref{exa:will1} and \ref{exa:will2})
and situating the framework in the context of existing paradigms
(Observations \ref{obs:rel} and \ref{obs:ssm}).
We also provide a sound and complete axiomatization of the corresponding logic of evidence and knowledge.
In Section \ref{section:evidence-belief} we extend evidence models to incorporate belief, discuss some possible relationships between belief and evidence, and axiomatize the resulting logics. Section \ref{section:evidence-knowability} introduces a dynamic component to the models in the form of an evidence combination operation, and explores how this additional structure can be used to define a notion of \emph{knowability}; once again, we provide an axiomatization of the relevant logics. Section \ref{section:conclusion} concludes with a discussion of ongoing work. 
Omitted proofs appear in the full version.


\section{Syntactic Preliminaries}

We first
specify a class of logical languages appropriate for the kinds of reasoning that concern us in this paper.
Given unary modalities $\star_{1}, \ldots, \star_{k}$, let $\L_{\star_{1}, \ldots, \star_{k}}$ denote the propositional language recursively generated by
$$
\phi ::= p \, | \, \lnot \phi \, | \, \phi \wedge \psi \, | \, \star_{i} \phi,
$$
where $p \in \textsc{prop}$, the (countable) set of \emph{primitive propositions}, and $1 \leq i \leq k$. Our focus in this paper is on  the trimodal languages $\L_{K,E,B}$ and $\L_{K,E,\Box}$, and various bi-modal and uni-modal fragments thereof, where we read $K\phi$ as ``the agent knows $\phi$'', $E\varphi$ as ``the evidence entails $\varphi$'', $B\phi$ as ``the agent believes $\phi$'', and finally $\Box \phi$ as ``$\phi$ is knowable'' or ``the agent could come to know $\phi$''. The Boolean connectives $\lor$, $\lthen$, and $\liff$ are defined as usual, and $\falsum$ is defined as an abbreviation for $p \land \lnot p$. We also employ $\M$ as an abbreviation for $\neg K \neg$, $\ME$ for $\neg E\neg$, $\MB$ for $\neg B\neg$, and $\aM$ for $\neg\aK\neg$. 
\begin{table}[htp]
\begin{center}
\begin{tabularx}{\textwidth}{>{\hsize=.6\hsize}X>{\hsize=1.3\hsize}X>{\hsize=1.1\hsize}X}
\toprule
(K$_{\star}$) & $\proves \star(\phi \imp \psi) \imp (\star\phi \imp \star\psi)$ & Distribution\\
(D$_{\star}$) & $\proves \star\phi \imp \lnot\star\lnot\phi$ & Consistency\\
(T$_{\star}$) & $\proves \star\phi \imp \phi$ & Factivity\\
(4$_{\star}$) & $\proves \star\phi \imp \star\star\phi$ & Positive introspection\\
(5$_{\star}$) & $\proves \lnot\star\phi \imp \star\lnot\star\phi$ & Negative introspection\\
(Nec$_{\star}$) & from $\proves \phi$ infer $\proves \star \phi$ & Necessitation\\
\bottomrule
\end{tabularx}
\end{center}
\caption{Some axiom schemes and a rule of inference for $\star$} \label{tbl:axs}
\end{table}%

\vspace{-5mm}

Let $\mathsf{CPL}$ denote an axiomatization of classical propositional logic. Then, following standard naming conventions, we define the following logical systems:
$$
\begin{array}{rcl}
\mathsf{K}_{\star} & = & \mathsf{CPL} \textrm{ + (K$_{\star}$) + (Nec$_{\star}$)}\\
\mathsf{KT}_{\star} & = & \mathsf{K}_{\star} \textrm{ + (T$_{\star}$)}\\
\mathsf{S4}_{\star} & = & \mathsf{KT}_{\star} \textrm{ + (4$_{\star}$)}\\
\mathsf{S5}_{\star} & = &\mathsf{S4}_{\star} \textrm{ + (5$_{\star}$)}\\
\mathsf{KD45}_{\star} & = & \mathsf{K}_{\star} \textrm{ + (D$_{\star}$) + (4$_{\star}$) + (5$_{\star}$)}.
\end{array}
$$

\section{Evidence Models for Evidence Entailment and Knowledge} \label{section:evidence-knowledge}



An \defin{evidence space} is a tuple $(X, \cE, I)$ where $X$ is a nonempty set of \emph{worlds}, $\cE$ is a nonempty set of \emph{evidence states}, and  $I = \{I_{e}\}_{e \in \cE}$ is a parametrized family of functions $I_{e}: X \to 2^{X}$. 
An \defin{evidence model} $\cM=(X, \cE, I, v)$ (over \textsc{prop}) is an evidence space $(X, \cE, I)$ equipped with a \emph{valuation function} $v: \textsc{prop} \to 2^{X}$. 

Intuitively, each $e \in \cE$ represents a ``state of evidence'' the agent may be in---perhaps arising from having made some observation, performed some experiment, found some clue, etc. Crucially, the agent is not conceptualized as being uncertain about \textit{which} state of evidence they find themselves in, but rather about the \textit{interpretation} of any such $e \in \cE$. In particular, the evidence $e$ at world $x$ rules out exactly those worlds outside of $I_{e}(x)$, so $I_{e}(x)$ tells us what the evidence $e$ \textit{actually entails at $x$}.
Call $I_{e}(x)$ an \emph{interpretation of $e$}.
These models therefore differ from many standard representations of evidence: rather than representing evidence directly as subsets of the state space, states of evidence are treated as abstract objects, each of which gets associated, via $I$, to various possible subsets of the state space---that is, various possible interpretations.

Suppose $x$ is the actual world and $e$ is the evidence state the agent is in; then, intuitively,
we should have $x \in I_{e}(x)$, since otherwise the evidence would rule out the actual world, which seems absurd.
Say that $x$ and $e$ are \defin{coherent} when $x \in I_{e}(x)$ and define
$$U_{e} = \{x \in X \: : \: x \in I_{e}(x)\},$$
{\em the collection of worlds that cohere with $e$}.
Since these are precisely the worlds at which the true interpretation of $e$ is compatible with the world, intuitively $U_{e}$ consists of exactly those worlds at which $e$ is a possible state of evidence. Thus, when $x$ and $e$ are coherent (i.e., when $x \in U_{e}$), we call the corresponding pair $(x,e)$ an \defin{evidence scenario}---we think of such pairs as being wholistic, self-consistent descriptions of the world and the agent's state of evidence, analogous to the ``epistemic scenarios'' of subset space logic \cite{MossP92,DMossP96,MossPS07}. 
Similarly to subset space semantics, formulas will be
interpreted in evidence models not at worlds $x$ but at evidence scenarios $(x, e)$. 

We are now in a position to formalize our notion of (actual) evidence entailment. Given an evidence model $\cM=(X, \cE, I, v)$ and an evidence scenario $(x, e)$ in $\cM$, we interpret $\cL_{E}$ in $\cM$ as follows:
$$
\begin{array}{lcl}
(\cM,x,e) \models p & \textrm{ iff } & x \in v(p)\\
(\cM,x,e) \models \lnot \phi & \textrm{ iff } & (\cM,x,e) \not\models \phi\\
(\cM,x,e) \models \phi \land \psi & \textrm{ iff } & (\cM,x,e) \models \phi \textrm{ and } (\cM,x,e) \models \psi\\
(\cM,x,e) \models E \phi & \textrm{ iff } & I_{e}(x) \subseteq \val{\phi}_\cM^{e} \\
\end{array}
$$
where $\val{\phi}_\cM^{e} = \{x \in U_{e} \: : \: (\cM, x,e) \models \phi\}$, the \defin{truth set of $\varphi$ with respect to $e$}. We omit mention of $\cM$ when the model is clear from the context.
A formula $\phi$ is said to be \defin{satisfiable} in an evidence model $\cM$ if there is some evidence scenario $(x, e)$ of $\cM$ such that $(\cM, x, e) \models \phi$ and \defin{valid} in $\cM$ if for all $(x, e)$ of $\cM$, we have $(\cM, x, e) \models \phi$.
Note that, by definition, if $(x,e)$ is an evidence scenario we have $x \in I_{e}(x)$, from which it follows that these semantics validate $E\phi \lthen \phi$; that is, actual evidence entailment is factive.

\emph{Knowledge} is often identified in epistemic models with what follows from the agent's information/evidence. In the present framework, however, this is arguably far too strong, since what \textit{actually} follows from the evidence is a fact about the world that the agent may not have any access to. Somewhat more precisely:
at the world $x$ the evidence $e$ entails $I_{e}(x)$ (as a matter of fact), but of course the agent might be uncertain about which world is the true world, and therefore uncertain about what $e$ actually entails. Nonetheless, even without knowing what the evidence $e$ actually entails, they can at a minimum be certain that \textit{whatever} it entails, the world is compatible with that: in other words, the world is somewhere in $\bigcup_{y \in X}I_{e}(y)$.%
\footnote{``But what if the agent doesn't know that the actual evidence is $e$?'' one might object. But the intent is for \textit{all} uncertainty about the evidence to be encoded in the state space---we take $e$ to be an abstract description of the evidence that is broad enough to be compatible with \textit{every possible interpretation thereof} (as represented by the sets $\{I_{e}(y) \: : \: y \in X\}$).
For example, we might imagine an agent who has pointed their measuring device at a phenomenon they wish to measure, and as a result they now see the number 11.1 on a little display window. Perhaps they do not know the margin of error of this device, or to what degree it is calibrated; perhaps they don't even know what exactly it is measuring! But what they \textit{do} know (plausibly, if we restrict our attention to non-skeptical scenarios), is that the window reads 11.1.}
This motivates the following semantics for knowledge:
$$(\cM, x,e) \models K\phi \; \mbox{iff} \; \bigcup_{y \in X}I_{e}(y) \subseteq \val{\phi}^{e}.$$
In other words, the agent knows $\phi$ just in case $\phi$ follows from \textit{every} interpretation of the evidence. Note that under these semantics, the scheme $K\phi \lthen E\phi$ is valid but its converse is generally not. It is also easy to see that $K$ is an $\mathsf{S5}$ modality, since the set $\bigcup_{y \in X}I_{e}(y)$ does not depend on the state.

Consider now the following natural condition:
\begin{enumerate}
\item[(E1)]
$y \in I_{e}(x) \; \rimp \; y \in I_{e}(y)$.
\end{enumerate}
This simply states that at every world $x$, the evidence entails that it coheres with the world. (E1) implies (in fact is equivalent to) the following:
$$\bigcup_{y \in X}I_{e}(y) = U_{e}.$$
Therefore, under (E1), given evidence $e$, the agent is in a position to know that the world coheres with that evidence. This also implies that in evidence models satisfying (E1), the above semantic clause for knowledge can be equivalently restated as
$$(\cM, x,e) \models K\phi \; \mbox{iff} \; U_{e} \subseteq \val{\phi}^{e}.$$
Note that since $\val{\phi}^{e} \subseteq U_{e}$ by definition, this semantic clause for knowledge is in turn equivalent to $U_{e} = \val{\phi}^{e}$.

Next we observe that evidence models subsume standard relational (Kripke-style) semantics, as well as subset space semantics.
\begin{observation} \label{obs:rel}
\em{
Standard \emph{relational models}
(see, e.g., \cite{BdRV01,FHM95}) of the form $(X, R_E, v)$, where $R_E$ is the accessibility relation for a unary modality $E$, arise as a special case of evidence models when the accessibility relation is reflexive: simply take $\cE = \{e\}$ and define $I_{e}(x) = \{y \in X \: : \: xR_Ey\}$. Then it is easy to see that (E1) is satisfied (since $R_E$ is reflexive) and $x \models E\phi$ (in the relational model, i.e., when $E$ is interpreted by universal quantification over all $R_E$-accessible states) just in case $(x,e) \models E\phi$ (in the evidence model). Moreover, since $U_{e} = X$ in this case, the knowledge modality in the evidence model coincides with the universal modality in the relational model, that is, $(x, e)\models K\phi$ (in the evidence model) iff for all $x\in X$, $x\models \phi$ (in the relational model). This correspondence will play a crucial role in our completeness proof for $\cL_{K, E}$ with respect to evidence models. \qed
}
\end{observation}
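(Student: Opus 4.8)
The plan is to verify the three assertions in turn, establishing the semantic equivalence by a routine induction on formula structure.

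First I would check (E1) and compute the two unions that govern the $E$- and $K$-clauses. Since $R_E$ is reflexive, $y R_E y$ for every $y \in X$, hence $y \in I_e(y)$ for all $y$; the consequent of (E1) is thus always true, so the implication $y \in I_e(x) \rimp y \in I_e(y)$ holds trivially. The same observation yields $U_e = \{x \in X : x \in I_e(x)\} = \{x : x R_E x\} = X$, and $\bigcup_{y \in X} I_e(y) = X$ (each $z$ lies in $I_e(z)$ by reflexivity). These two equalities are what drive the remaining two claims.

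Next I would prove, by induction on the complexity of $\phi \in \L_{K,E}$, that $(x,e) \models \phi$ in the evidence model iff $x \models \phi$ in the relational model---reading $E$ as universal quantification over $R_E$-successors and $K$ as the universal modality over $X$. The atomic case is immediate since both models share the valuation $v$, and the Boolean cases are routine. For $E\phi$, unwinding the definitions with $U_e = X$ gives $\val{\phi}^e = \{z \in X : (z,e) \models \phi\}$ and $I_e(x) = \{y : x R_E y\}$, so $(x,e) \models E\phi$ iff every $R_E$-successor $y$ of $x$ satisfies $(y,e) \models \phi$; the induction hypothesis then converts this into $x \models E\phi$. For $K\phi$, using $\bigcup_{y} I_e(y) = X$ gives $(x,e) \models K\phi$ iff $(z,e) \models \phi$ for all $z \in X$, which by the induction hypothesis is precisely the universal-modality reading of $K\phi$.

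I do not expect a genuine obstacle: the entire content is that reflexivity forces both $U_e$ and $\bigcup_y I_e(y)$ to collapse to $X$, after which the evidence-model clauses for $E$ and $K$ reduce syntactically to the standard relational clause and the universal modality, respectively. The only point needing a little care is to run the induction over the full joint language $\L_{K,E}$ rather than one modality at a time, so that the induction hypothesis is available in both modal cases---but since neither clause introduces any dependence the induction does not already cover, this is pure bookkeeping.
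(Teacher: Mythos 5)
Your proposal is correct and follows essentially the same route as the paper: verify (E1) via reflexivity, observe that $U_{e} = \bigcup_{y \in X} I_{e}(y) = X$, and then run a structural induction over the joint language $\L_{K,E}$ whose modal cases reduce to the relational clauses---this is precisely the content of the paper's Observation \ref{obs:rel} together with the induction carried out in Lemma \ref{lem:modal.equiv1}. No gaps.
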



\begin{observation} \label{obs:ssm}
\em{
\emph{Subset space models} \cite{MossP92,DMossP96,MossPS07} can also naturally be viewed as special cases of evidence models. Given a subset space model $\cX=(X, \cS, v)$,%
\footnote{That is, $X$ is a nonempty set of states, $\cS$ a collection of subsets of $X$ called \emph{epistemic ranges}, and $v$ a valuation function; formulas are evaluated with respect to \emph{epistemic scenarios} of the form $(x,U)$ where $x \in U \in \cS$, and $(x,U) \models K \phi \dimp U \subseteq \val{\phi}^{U} \defeq \{y \in X \: : \: (y,U) \models \phi\}$.}
we can take $\cE = \{e_{U} \: : \: U \in \cS\}$,
so $\cE$ consists of one evidence state for each $U \in \cS$,
and define $I_{e_{U}}(x) = U$ for all $x\in X$.
Thus, each $e_{U}$ is interpreted uniformly as corresponding to the subset $U$.
Then $\cM=(X, \cE, I, v)$ is an evidence model which clearly satisfies (E1)
(since each $I_{e_{U}}$ is constant)
and we have $U_{e_{U}} = U$. Therefore, $(x,U)$ is an epistemic scenario of $\cX$ if and only if $(x, e_{U})$ is an evidence scenario of $\cM$, and moreover:
$$(\cX, x,U) \models K \phi \; \mbox{iff}  \; (\cM, x,e_{U}) \models K\phi \; \mbox{iff} \; (\cM, x,e_{U}) \models E\phi. \qed$$
}
\end{observation}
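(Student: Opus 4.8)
The plan is to discharge the structural assertions by direct computation and then to reduce the semantic claim to a single induction on formula complexity. First I would confirm that $\cM = (X, \cE, I, v)$ is a genuine evidence model: $\cE = \{e_U : U \in \cS\}$ is nonempty because $\cS$ is, each $I_{e_U}$ is the (well-defined) constant map $x \mapsto U$ into $2^X$, and $v$ is inherited unchanged. Condition (E1) then holds trivially, since $I_{e_U}(x) = U = I_{e_U}(y)$ for all $x,y$, so $y \in I_{e_U}(x)$ forces $y \in I_{e_U}(y)$. Computing $U_{e_U} = \{x \in X : x \in I_{e_U}(x)\} = \{x : x \in U\} = U$ gives the scenario correspondence immediately: $(x,U)$ is an epistemic scenario of $\cX$ exactly when $x \in U \in \cS$, i.e.\ when $x \in U_{e_U}$, i.e.\ when $(x, e_U)$ is an evidence scenario of $\cM$.

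For the two-part semantic equivalence, I would first dispose of the rightmost biconditional $(\cM, x, e_U) \models K\phi$ iff $(\cM, x, e_U) \models E\phi$, which is essentially definitional under (E1). Because $I_{e_U}$ is constant with value $U$, we have $\bigcup_{y \in X} I_{e_U}(y) = U = I_{e_U}(x)$, so the knowledge clause $\bigcup_{y \in X} I_{e_U}(y) \subseteq \val{\phi}^{e_U}$ and the entailment clause $I_{e_U}(x) \subseteq \val{\phi}^{e_U}$ collapse to the single condition $U \subseteq \val{\phi}^{e_U}$; hence the two hold together.

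The substance lies in the leftmost biconditional, and I would obtain it from the stronger claim that for every $\L_K$-formula $\phi$ and every $x \in U$ one has $(\cX, x, U) \models \phi$ iff $(\cM, x, e_U) \models \phi$, proved by induction on $\phi$. The atomic case holds because the valuation $v$ is shared, and the Boolean cases are immediate from the induction hypothesis. For $\phi = K\psi$: the subset space clause unfolds to ``$(\cX, y, U) \models \psi$ for all $y \in U$'', while the evidence clause unfolds, using $U_{e_U} = U$ and the computation above, to ``$(\cM, y, e_U) \models \psi$ for all $y \in U$''; since each such $y$ yields valid scenarios $(y, U)$ and $(y, e_U)$, the induction hypothesis matches the two quantified conditions pointwise, so the two $K\psi$-evaluations agree. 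Specializing to $\phi$ itself then gives $(\cX, x, U) \models K\phi$ iff $(\cM, x, e_U) \models K\phi$.

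I expect the only real obstacle to be bookkeeping about the restricted truth sets: by definition $\val{\psi}^{e_U} \subseteq U_{e_U} = U$, so the inductive hypothesis may only be invoked at scenarios $(y, e_U)$ with $y \in U$, which is precisely the range over which the subset space $K$-clause quantifies. Keeping this alignment explicit is what makes the modal step close cleanly; there is no genuinely hard content beyond it.
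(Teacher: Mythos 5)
Your proposal is correct and follows essentially the same route as the paper, which treats the observation as self-justifying: (E1) and $U_{e_U}=U$ fall out of the constancy of $I_{e_U}$, the $K$/$E$ collapse is definitional, and the remaining biconditional is the standard pointwise modal-equivalence induction (analogous to the paper's Lemma \ref{lem:modal.equiv1}). You simply make explicit the bookkeeping the paper leaves implicit; nothing is missing.
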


\begin{example} \label{exa:will1}
\em{
Consider Williamson's famous clock example
\cite{Williamson00}:
you look at a clock and have a perceptual experience that seems to indicate to you that the minute hand is somewhere on the righthand side of the clock. Let's index the possible positions of the minute hand with the interval $C = [0,60)$ in the obvious way, so for example the state $15$ corresponds to it being quarter-past, $30$ to half-past, etc. The perceptual experience you have is supposed to constitute evidence of some sort, with presumably some margin of error involved. That is, if we call this perceptual experience $e$, we want to say that $e$ doesn't tell us the exact position of the minute hand, but rather guarantees that it must lie in some interval containing the true position in its interior. Call this the \emph{margin of error principle}.

In our framework, we can and will incorporate a further type of uncertainty, namely, uncertainty \textit{about the margin of error}. This seems a very natural type of ignorance to model---after all, we may be sure that our perceptions are not exact without being sure of exactly how inexact they are! One way of capturing this scenario using a simple evidence model $M = (C, \cE, I, v)$ is to define
$$
I_{e}(c) = \left\{ \begin{array}{ll}
\big(\frac{c}{2},\frac{c+30}{2}\big) & \textrm{if $c \in (0,30)$}\\
\emptyset & \textrm{otherwise.}
\end{array} \right.
$$
Then it is easy to see that $c \in I_{e}(c)$ iff $c \in (0,30)$, and moreover for all $c \in C$, $I_{e}(c) \subset (0,30)$, from which (E1) follows. Note also that in every state $c$, $I_{e}(c)$ contains $c$ in its interior, so this model satisfies the margin of error principle.

Suppose our primitive propositions include those in the set $\{pos_{t} \: : \: t \in \{0,1,\ldots,59\}\}$, where $pos_{t}$ is read ``the minute hand is $t$ minutes past twelve'' and $v$ is defined in the obvious way: $v(pos_{t}) = \{t\}$. Clearly, evidence is not ``introspective'' in this model, in the sense that the principle $E\phi \lthen EE\phi$ can fail; for instance, it is easy to see that $(15, e) \models E\lnot pos_{6}$ since $6 \notin (7.5,22.5) = I_{e}(15)$, but $(15,e) \not\models EE\lnot pos_{6}$, since for example $6 \in (4, 19) = I_{e}(8)$ and $8 \in (7.5,22.5) = I_{e}(15)$. That is, at state $15$, the evidence \textit{in fact} rules out that it's 6 minutes past twelve, but doesn't itself guarantee that it rules this out. On the other hand, it is easy to see that $U_{e} = (0,30)$, hence for all $c \in (0,30)$, $(c, e) \models K \lnot pos_{45}$---you are in a position to \textit{know} that the minute hand is not pointing directly to the left, even though you don't know exactly what your evidence entails.\footnote{Incidentally, in this model you are also in a position to know that your evidence is compatible with the hand pointing directly to the right, that is, for all $c \in (0,30)$, $(c,e) \models K \lnot E \lnot pos_{15}$. By contrast, for each $t \neq 15$, there is a $c \in (0,30)$ such that $(c,e) \models E \lnot pos_{t}$.} \qed
}
\end{example}

Of course, in building an epistemic model of this scenario, there is no reason to assume that $C$ itself constitutes the \textit{epistemic} state space. Indeed, doing so leads to the potentially problematic implication that \textit{all} the uncertainty the agent may face is indexed by the position of the clock's minute hand.

\begin{example} \label{exa:will2}
\em{
We consider again the Williamson clock case, except this time we expand the epistemic state space to include not only the possible positions of the minute hand $C$, but also an additional parameter that captures variation in the margin for error. This allows us to ``de-couple'' the margin for error from the actual position of the hand, representing a richer space of epistemic possibilities in which the position of the hand and the margin for error can to some extent vary independently of one another. More precisely, define an evidence model $M' = (C \times (0,1), \cE', I', v')$ where
$$
I'_{e}(c,\mu) = \left\{ \begin{array}{ll}
((1-\mu)c, (1-\mu)c + 30\mu) \times (0,1) & \textrm{if $c \in (0,30)$}\\
\emptyset & \textrm{otherwise.}
\end{array} \right.
$$
Intuitively, $\mu$ captures the (actual) precision of the observation: lower values of $\mu$ correspond to higher precision, and higher values of $\mu$ correspond to lower precision (it is easy to see that the length of the interval $I'_{e}(c,\mu)$ is just $30\mu$). The intervals defined in the previous example arise as the special case where $\mu = \frac{1}{2}$, since $I'_{e}(c,\frac{1}{2}) = \big(\frac{c}{2},\frac{c+30}{2}\big) \times (0,1)$. Notice also that this model assumes that your observation of the clock provides \textit{no} evidence at all pertaining to the precision of that observation as captured by the value of $\mu$, since every possible value of $\mu$ is compatible with every interpretation of $e$. This, of course, is not a required constraint of the present framework, but merely one we find plausible and convenient for the current scenario.

Despite the extra richness in the epistemic state space, this evidence model shares several key properties with the one described in Example \ref{exa:will1}. As before, one can easily check that $(c,\mu) \in I'_{e}(c,\mu)$ iff $c \in (0,30)$, and for all $c \in C$, $I'_{e}(c,\mu) \subset (0,30) \times (0,1)$, so (E1) holds. Moreover, in every state $(c,\mu)$, the first component of $I'_{e}(c,\mu)$ contains $c$ in its interior, so this model also satisfies the margin of error principle. Furthermore, if we interpret $pos_{t}$ in $M'$ in the obvious way (i.e., by setting $v'(pos_{t}) = \{t\} \times (0,1)$), then the scheme $E\phi \lthen EE\phi$ does not hold here either: for example, $((15,.3),e) \models E \lnot pos_{10}$ since $10 \notin (10.5,19.5)$ and $I'_{e}(15,.3) = (10.5,19.5) \times (0,1)$; on the other hand, $((15,.3),e) \not\models EE \lnot pos_{10}$, since $(15,0.4) \in I'_{e}(15,.3)$, and $(10,.3) \in (9,21) \times (0,1) = I'_{e}(15,0.4)$. And finally, analogously to the previous example, we have $U_{e} = (0,30) \times (0,1)$, so for all $c \in (0,30)$ and all $\mu \in (0,1)$, $((c,\mu),e) \models K \lnot pos_{45}$. \qed
}
\end{example}

\subsection{Soundness and Completeness for $\cL_{K, E}$} \label{subsec:ax:EK}

When we interpret $\cL_{E,K}$ in the class of evidence models satisfying (E1), the logic of evidence entailment and knowledge we obtain
is a compound of two familiar logics together with one simple interaction axiom:
$$\EK = \mathsf{S5}_K+\mathsf{KT}_E + \mbox{(KE)},$$
where (KE) denotes the axiom scheme $K\varphi\rightarrow E\varphi$.%
\footnote{This axiom system, as an extension of the normal modal logic $\mathsf{KT}$ with the universal modality, has previously been studied in \cite{GorankoP92} within the standard relational framework. The completeness results obtained therein will help us prove completeness with respect to evidence models.}

\begin{thm}\label{thm:soundness:EK}
$\EK$ is a sound axiomatization of $\L_{E,K}$ with respect to the class of evidence models satisfying (E1).
\end{thm}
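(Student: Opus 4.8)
The plan is to prove soundness in the standard way: show that every axiom scheme of $\EK$ is valid on the class of evidence models satisfying (E1), and that modus ponens together with the two necessitation rules (for $K$ and for $E$) preserve validity on this class; validity of the classical propositional fragment is immediate from the Boolean clauses of the semantics. Before checking individual schemes, I would record the one structural consequence of (E1) that makes everything go through cleanly. Since (E1) says $y \in I_{e}(x) \;\rimp\; y \in I_{e}(y)$, every $y \in I_{e}(x)$ lies in $U_{e}$, so $I_{e}(x) \subseteq U_{e}$ for all $x$ and all $e$. This guarantees that the worlds quantified over in the clause for $E$ are themselves coherent with $e$, i.e.\ genuine components of evidence scenarios, so that $\val{\phi}^{e}$ really is the appropriate domain of evaluation and both modalities can be read off as ordinary relational modalities.

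With this in hand I would treat $E$ and $K$ separately. For $E$: fixing $e$, define $x \to_{e} y$ iff $y \in I_{e}(x)$ on the set $U_{e}$; under (E1) this relation never leaves $U_{e}$, and the clause $(\cM,x,e)\models E\phi$ iff $I_{e}(x)\subseteq\val{\phi}^{e}$ is exactly the standard box clause for $\to_{e}$. Distribution $(\mathsf{K}_E)$ then holds for any relation, and factivity $(\mathsf{T}_E)$ follows from reflexivity of $\to_{e}$, namely $x \in I_{e}(x)$, which holds because $(x,e)$ is an evidence scenario (this is the already-noted validity of $E\phi \lthen \phi$). For $K$: the clause $(\cM,x,e)\models K\phi$ iff $\bigcup_{y\in X} I_{e}(y)\subseteq\val{\phi}^{e}$ does not depend on $x$, and under (E1) the left-hand set equals $U_{e}$; thus $K$ is the universal modality over the slice $\{(y,e) : y \in U_{e}\}$. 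Since the universal modality is $\mathsf{S5}$, all of $(\mathsf{K}_K)$, $(\mathsf{T}_K)$, $(\mathsf{4}_K)$, $(\mathsf{5}_K)$ are valid; concretely $(\mathsf{T}_K)$ uses $x \in U_{e}$, while $(\mathsf{4}_K)$ and $(\mathsf{5}_K)$ use that the truth value of $K\phi$ at $(x,e)$ is constant in $x$, so $\val{K\phi}^{e}$ is either all of $U_{e}$ or empty. Finally (KE) is immediate from $I_{e}(x)\subseteq\bigcup_{y\in X} I_{e}(y)$: if the larger set is contained in $\val{\phi}^{e}$, so is $I_{e}(x)$.

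For the rules, modus ponens preserves validity by the Boolean clauses, and necessitation for $\star\in\{K,E\}$ preserves validity because if $\phi$ is valid in $\cM$ then $\val{\phi}^{e}=U_{e}$ for every $e$, whence both $\bigcup_{y\in X} I_{e}(y)\subseteq U_{e}=\val{\phi}^{e}$ and $I_{e}(x)\subseteq U_{e}=\val{\phi}^{e}$ hold, giving $K\phi$ and $E\phi$ at every scenario. I do not expect a genuine obstacle, as soundness for this kind of system is routine; the one point that warrants care, and the place where (E1) is essential rather than cosmetic, is the reduction in the first paragraph, namely verifying $I_{e}(x)\subseteq U_{e}$ so that the set-inclusion clauses are genuinely quantifications over evidence scenarios and the two modalities inherit the standard relational validities. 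Everything else is bookkeeping on top of the observation that $\bigcup_{y\in X} I_{e}(y)$ is independent of the evaluation world.
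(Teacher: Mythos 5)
Your proof is correct and follows essentially the same route as the paper's: a direct semantic verification resting on the observations that (E1) gives $I_{e}(x) \subseteq \bigcup_{y \in X} I_{e}(y) = U_{e}$, that the $K$-clause is state-independent (yielding $\mathsf{S5}_K$), and that $x \in I_{e}(x)$ at evidence scenarios yields (T$_E$). The only difference is presentational—you appeal to the standard relational/universal-modality validities where the paper grinds through each axiom by hand—so there is nothing to flag.
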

\commentout{\begin{proof}
Let $\cM = (X, \cE, I, v)$ be an evidence model satisfying  (E1), $(x, e)$ an evidence scenario of $\cM$, and $\varphi, \psi\in \L_{E, K}$.

\begin{itemize}
\item[(K$_{K}$)] Suppose $(x, e)\models K(\varphi\lthen \psi)$ and $(x, e)\models K\varphi$. This means $U_e = \val{\phi\lthen \psi}^{e}= (U_e\setminus  \val{\phi}^e)\cup \val{\psi}^e$ and $U_e = \val{\phi}^e$.  Therefore, $U_e= \val{\psi}^e$, i.e., $(x, e)\models K\psi$.

\item[(T$_{K}$)] Suppose $(x, e)\models K\varphi$, i.e.,  $U_e  = \val{\phi}^e$. Since $(x, e)$ is an evidence scenario, we have that $x\in U_e$, therefore, $(x, e)\models \varphi$.

\item[(4$_{K}$)] Suppose $(x, e)\models K\varphi$, i.e.,  $U_e= \val{\phi}^e$. This implies that $U_e= \val{K\phi}^e$, thus, $(x, e)\models KK\varphi$

\item[(5$_{K}$)] Suppose $(x, e)\not \models K\varphi$, i.e.,  $U_e\not= \val{\phi}^e$. This implies that $\val{K\phi}^e=\emptyset$, thus, $\val{\neg K\phi}^e=U_e$. Therefore, $(x, e) \models K\neg  K\varphi$.

\item[(K$_{E}$)]  Suppose $(x, e)\models E(\varphi\lthen \psi)$ and $(x, e)\models E\varphi$. This means $I_{e}(x) \subseteq \val{\phi\lthen \psi}^{e}= (U_e\setminus  \val{\phi}^e)\cup \val{\psi}^e$ and $I_{e}(x) \subseteq \val{\phi}^e$.  As  $I_{e}(x) \subseteq U_e$, we obtain  $I_{e}(x) \subseteq \val{\psi}^e$, i.e., $(x, e)\models E\psi$.

\item[(T$_{E}$)] Suppose $(x, e)\models E\varphi$, i.e.,  $I_{e}(x) \subseteq \val{\phi}^e$. Since $(x, e)$ is an evidence scenario, we have that $x\in U_e$, i.e., that $x\in I_e(x)$. Therefore, $x\in I_{e}(x) \subseteq \val{\phi}^e$, i.e., $(x, e)\models \varphi$.

\item[(KE)]  Suppose $(x, e) \models K\varphi$, i.e.,  $U_e = \val{\phi}^e$. Then, since $I_e(x)\subseteq \bigcup_{y \in X}I_{e}(y)=U_e$, we have $(x, e) \models E\varphi$. \qedhere
\end{itemize}
\end{proof}}

Our completeness proof relies on a standard Kripke-style interpretation of $\L_{E,K}$   in relational models and the completeness results pertaining thereto. We therefore begin with a brief review of these notions.

A \defin{relational evidence frame} is a
pair $(X, R_{E})$ where $X$ is a non-empty set and $R_{E}$ is a reflexive, binary relation on $X$.
A \defin{relational evidence model} is a relational evidence frame equipped with a valuation function $v: \textsc{prop} \lthen 2^X$. The language $\L_{E,K}$ is interpreted in a relational  evidence model $M = (X, R_E, v)$ by extending the valuation
function via the standard recursive clauses for the Boolean connectives together with the following:
$$
\begin{array}{lcl}
(M,x) \models E \phi & \textrm{ iff } & R_E(x)\subseteq \sval{\phi}_{M}\\
(M,x) \models K \phi & \textrm{ iff } & X= \sval{\phi}_{M},\\
\end{array}
$$
where  $R_E(x)=\{y\in X : xR_E y\}$ and $\sval{\phi}_{M} = \{x \in X \: : \: (M,x) \models \phi\}$.
Thus $E$ is interpreted by universal quantification over the $R_{E}$-accessible states (as usual for a box-type modality in standard relational semantics), while $K$ is interpreted as a universal modality, as might be expected from Observation \ref{obs:rel}.
We omit mention of $M$ when the model is clear from context.

\begin{thm}[\cite{GorankoP92}]\label{thm:comp:reln1}
$\EK$ is a sound and complete axiomatization of $\L_{E,K}$ with respect to the class of relational evidence models. 
\end{thm}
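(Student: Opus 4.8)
The plan is to prove soundness by routine frame-validity checks and completeness by a canonical-model construction adapted to the fact that $K$ is a \emph{universal} (global) modality. For soundness, note that $K$ quantifies over all of $X$, so it validates every $\mathsf{S5}_K$ axiom; $E$ is an ordinary box over the reflexive relation $R_E$, so it validates $\mathsf{KT}_E$; and (KE) holds because $R_E(x) \subseteq X$, whence $X = \sval{\varphi}$ forces $R_E(x) \subseteq \sval{\varphi}$, i.e.\ $K\varphi \lthen E\varphi$ is valid. The interesting direction is completeness, where the obstacle is precisely that $K$ must be interpreted universally: a plain canonical model, whose canonical $K$-relation is merely an $\mathsf{S5}$ equivalence relation rather than the total relation, is not literally a relational evidence model with $K$ read globally.

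The standard remedy (this is essentially the Goranko--Passy technique for the universal modality) is to build the canonical model and then \emph{restrict to a single $K$-equivalence class}. Concretely, I would let $X^c$ be the set of $\EK$-maximal consistent sets and define $\Gamma \, R_K^c \, \Delta$ iff $\{\varphi : K\varphi \in \Gamma\} \subseteq \Delta$ and $\Gamma \, R_E^c \, \Delta$ iff $\{\varphi : E\varphi \in \Gamma\} \subseteq \Delta$, with $v^c(p) = \{\Gamma : p \in \Gamma\}$. The $\mathsf{S5}_K$ axioms make $R_K^c$ an equivalence relation, and (T$_E$) makes $R_E^c$ reflexive. The crucial structural fact, which is where (KE) is used, is that $R_E^c \subseteq R_K^c$: if $\Gamma \, R_E^c \, \Delta$ and $K\varphi \in \Gamma$, then $E\varphi \in \Gamma$ by (KE), so $\varphi \in \Delta$; hence $\Gamma \, R_K^c \, \Delta$.

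Given a consistent formula $\varphi_0$, I would extend it to a maximal consistent set $\Gamma_0$ and pass to the submodel $M^* = (X^*, R_E^*, v^*)$ obtained by restricting to the $R_K^c$-equivalence class $X^* = [\Gamma_0]$. On $X^*$ the relation $R_K^c$ is the total relation $X^* \times X^*$, so $K$ is correctly interpreted as the universal modality over $X^*$; and $R_E^*$, the restriction of $R_E^c$, is still reflexive, so $M^*$ is a genuine relational evidence model. I expect the main obstacle to be verifying that the truth lemma survives the restriction, and this is exactly the step that consumes the inclusion $R_E^c \subseteq R_K^c$. For the $E$-modality one must check that no $E$-successors are lost in cutting down to $X^*$: since $R_E^c \subseteq R_K^c$, every $R_E^c$-successor of a world in $X^*$ lies again in $X^*$, so the existence lemma for $\ME = \neg E \neg$ still produces witnesses inside $X^*$. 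For $K$, whenever $K\varphi \notin \Gamma$ we have $\M\neg\varphi \in \Gamma$, and the existence lemma yields an $R_K^c$-successor refuting $\varphi$ which lies in $X^*$ precisely because $X^*$ is a full $R_K^c$-class. With the truth lemma established, $M^*, \Gamma_0 \models \varphi_0$, so every consistent formula is satisfiable in a relational evidence model, giving completeness.
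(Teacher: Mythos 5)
Your proof is correct: the soundness checks are routine, and the completeness argument---canonical model, the inclusion $R_E^c \subseteq R_K^c$ forced by (KE), and restriction to a single $R_K^c$-equivalence class so that $K$ becomes genuinely universal while all $E$-successors are retained---is exactly the standard Goranko--Passy technique for the universal modality. The paper itself gives no proof of this theorem, importing it directly from \cite{GorankoP92}, and your argument is a faithful reconstruction of the proof in that cited source.
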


Given a relational evidence model $M=(X, R_E, v)$, consider the tuple $\cM_M=(X, \{e\},  I, v)$
where $I = \{I_{e}\}$ and for all $x\in X$, $I_e(x)=R_E(x)$.
As shown in Observation \ref{obs:rel}, $\cM_M$ is an evidence model satisfying (E1), and $U_{e} = X$. In particular,
every pair $(x, e)$ is an evidence scenario in $\cM_M$.



\begin{lemma}\label{lem:modal.equiv1}
Let $M=(X, R_E, v)$ be a relational evidence model. Then  for all $\varphi\in \L_{E,K}$  and $x\in X$, we have 
$$M, x\models \varphi \mbox{ iff } \cM_M, (x, e)\models \varphi,$$
where $\cM_M=(X, \{e\},  I, v)$ as described above.
\end{lemma}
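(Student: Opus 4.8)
The plan is to prove the equivalence $M, x \models \varphi \iff \cM_M, (x,e) \models \varphi$ by induction on the structure of $\varphi \in \L_{E,K}$. First I would set up the induction, noting that the atomic and Boolean cases are routine: for primitive propositions, both semantics reduce to $x \in v(p)$ by definition, and the clauses for $\lnot$ and $\land$ are handled by the induction hypothesis in the standard way, since the Boolean connectives are interpreted identically in the two frameworks.

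Before treating the modal cases, it is convenient to record the relationship between the two truth sets. By the induction hypothesis applied pointwise, one obtains $\sval{\varphi}_M = \val{\varphi}_{\cM_M}^{e}$. Here I would use the observations already established about $\cM_M$: since $U_e = X$, every pair $(x,e)$ is an evidence scenario, so $\val{\varphi}_{\cM_M}^{e} = \{x \in X : \cM_M, (x,e) \models \varphi\}$, which by the induction hypothesis equals $\{x \in X : M, x \models \varphi\} = \sval{\varphi}_M$. This identity of truth sets is the bridge that makes the modal clauses match up.

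For the case $\varphi = E\psi$, I would unfold both semantic clauses. On the relational side, $M, x \models E\psi$ iff $R_E(x) \subseteq \sval{\psi}_M$; on the evidence side, $\cM_M, (x,e) \models E\psi$ iff $I_e(x) \subseteq \val{\psi}_{\cM_M}^{e}$. Since $\cM_M$ is defined so that $I_e(x) = R_E(x)$, and since the truth sets coincide by the preceding step, these two conditions are literally the same. For the case $\varphi = K\psi$, the relational clause reads $M, x \models K\psi$ iff $X = \sval{\psi}_M$, while the evidence clause reads $\cM_M, (x,e) \models K\psi$ iff $\bigcup_{y \in X} I_e(y) \subseteq \val{\psi}_{\cM_M}^{e}$. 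Because $U_e = X$ here, we have $\bigcup_{y \in X} I_e(y) = X$ (this follows from reflexivity of $R_E$, which is exactly what guarantees (E1) and $U_e = X$ in $\cM_M$), so the evidence clause becomes $X \subseteq \val{\psi}_{\cM_M}^{e}$, equivalently $X = \val{\psi}_{\cM_M}^{e}$ since the truth set is always a subset of $U_e = X$. Again invoking the identity of truth sets, this matches the relational clause.

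I do not anticipate a genuine obstacle here, as the lemma is essentially a bookkeeping verification that the constant-$I$, single-evidence-state model $\cM_M$ reproduces relational semantics exactly. The one point requiring slight care is ensuring that the truth-set identity is stated and invoked correctly at each modal step, and that the reduction $\bigcup_{y} I_e(y) = U_e = X$ is cited from the established properties of $\cM_M$ rather than re-derived; keeping the induction hypothesis in the strong form (equivalence at every world simultaneously, so that the truth sets coincide) is what makes both modal cases fall out immediately.
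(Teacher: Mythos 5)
Your proof is correct and matches the paper's own argument essentially verbatim: both proceed by induction on $\varphi$, dispatch the atomic and Boolean cases as routine, extract the truth-set identity $\sval{\varphi}_M = \val{\varphi}^{e}$ from the inductive hypothesis, and then reduce the $E$ and $K$ clauses using $I_e(x) = R_E(x)$ and $U_e = X$ respectively. No gaps.
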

\commentout{
\begin{proof} 
The proof follows by induction on the structure of $\varphi$; cases for the primitive propositions and
the Boolean connectives are elementary since their truth value depends only on the actual state, not on the evidence state. So assume inductively that the result holds for $\varphi$; we must
show that it holds also for $E\varphi$ and $K\varphi$. Note that the inductive hypothesis implies that $\|\varphi\|_M = \val{\varphi}^e$.

Case  for $E\varphi$:
\begin{align}
(M, x)\models E\varphi & \mbox{ iff }  R_E(x)\subseteq\sval{\varphi}_M \notag\\
& \mbox{ iff }  I_e(x)\subseteq \val{\varphi}^e \tag{since $R_E(x)=I_e(x)$, IH}\\
& \mbox{ iff } (\cM_M, x, e)\models E\varphi \notag
\end{align}

Case for $K\varphi$:
\begin{align}
(M, x)\models K\varphi & \mbox{ iff }  X = \sval{\varphi}_M \notag\\
& \mbox{ iff }   U_e =  \val{\varphi}^e \tag{since $U_e=X$, IH}  \\
& \mbox{ iff } (\cM_M, x, e)\models K\varphi \notag \qedhere
\end{align}
\end{proof}}

\begin{corollary}\label{cor:comp:evi1}
$\EK$ is a complete axiomatization of $\L_{E,K}$ with respect to the class of  evidence models satisfying (E1).
\end{corollary}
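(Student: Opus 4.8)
The plan is to prove completeness by contraposition, reducing the problem to the already-established completeness of $\EK$ over relational evidence models (Theorem \ref{thm:comp:reln1}) and then transporting any relational countermodel into an evidence model via the construction $M \mapsto \cM_M$ together with the truth-preservation result of Lemma \ref{lem:modal.equiv1}. The point is that all the genuine work of a Henkin/canonical-model argument has already been discharged in the relational setting by Goranko and Passy, so here it suffices to build a semantic bridge.

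Concretely, suppose $\not\proves_{\EK} \varphi$ for some $\varphi \in \L_{E,K}$; I want to exhibit an evidence model satisfying (E1) in which $\varphi$ fails at some evidence scenario. By Theorem \ref{thm:comp:reln1}, since $\varphi$ is not a theorem of $\EK$, there is a relational evidence model $M = (X, R_E, v)$ and a state $x \in X$ with $M, x \not\models \varphi$. Form the associated tuple $\cM_M = (X, \{e\}, I, v)$ with $I_e(y) = R_E(y)$ for all $y \in X$. As recorded just after Theorem \ref{thm:comp:reln1} (and in Observation \ref{obs:rel}), reflexivity of $R_E$ guarantees that $\cM_M$ is an evidence model satisfying (E1), that $U_e = X$, and hence that every pair $(y, e)$---in particular $(x, e)$---is a genuine evidence scenario of $\cM_M$.

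It then remains only to invoke Lemma \ref{lem:modal.equiv1}, which gives $M, y \models \psi$ iff $\cM_M, (y,e) \models \psi$ for every $\psi \in \L_{E,K}$ and every $y \in X$. Applying this with $\psi = \varphi$ and $y = x$ yields $\cM_M, (x, e) \not\models \varphi$. Thus $\varphi$ is not valid in the class of evidence models satisfying (E1), which is exactly the contrapositive of completeness. Combined with soundness (Theorem \ref{thm:soundness:EK}), this establishes that $\EK$ is a sound and complete axiomatization of $\L_{E,K}$ over (E1)-models.

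I do not expect a substantial obstacle in this argument, since the two inputs---relational completeness and the semantic equivalence of Lemma \ref{lem:modal.equiv1}---carry the entire burden. The only points requiring care are essentially bookkeeping: one must confirm that the transported structure $\cM_M$ genuinely satisfies (E1) and that the chosen pair $(x,e)$ is an evidence scenario, so that the failure of $\varphi$ there counts as a failure of validity in the intended class; both are immediate from reflexivity of $R_E$, as already noted above. The conceptual subtlety, such as it is, lies only in running the transfer in the correct direction---pulling a countermodel back from the relational setting into the evidence setting---rather than attempting a direct canonical-model construction within the evidence semantics, which would be considerably more involved.
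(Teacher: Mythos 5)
Your proposal is correct and follows exactly the paper's own argument: contrapose, extract a relational countermodel via Theorem \ref{thm:comp:reln1}, and transfer it to the (E1)-evidence model $\cM_M$ using Lemma \ref{lem:modal.equiv1}. The extra bookkeeping you note (that $\cM_M$ satisfies (E1) and that $(x,e)$ is an evidence scenario) is already recorded in the paper just before Lemma \ref{lem:modal.equiv1}, so there is nothing to add.
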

\begin{proof}
This follows from Theorem \ref{thm:comp:reln1} and Lemma \ref{lem:modal.equiv1}: if $\phi \in \L_{E,K}$ is such that $\not\proves_{\EK} \phi$, then by Theorem \ref{thm:comp:reln1}   there is a relational evidence model  $M$ that refutes $\phi$ at some state $x$. Then, by Lemma \ref{lem:modal.equiv1}, $\phi$ is also refuted in $\cM_M$ at the epistemic scenario $(x, e)$,
which completes the proof.
\end{proof}

\section{Evidence Models for Belief} \label{section:evidence-belief}

It is natural to wish to extend the framework we have developed to include a representation not only for knowledge but also \textit{belief}. Defining this extension is relatively straightforward, as it parallels a similar construction from previous work \cite{BO17}. The interest here arises not in the definition itself, but from the subsequent investigation into the interplay between belief and uncertainty about the interpretation of evidence.

A \defin{doxastic evidence model} is simply an evidence model $\cM = (X, \cE, I, v)$ 
in which truth is evaluated with respect to \defin{doxastic evidence scenarios}, which are tuples of the form $(x, e, V)$ where $(x,e)$ is an evidence scenario and 
$\emptyset\not =V \subseteq U_{e}$. The subset $V$ is meant to capture the beliefs of the agent: that is, each $y \in V$ is a world the agent (subjectively) considers possible. Given an evidence model  $\cM = (X, \cE, I, v)$ and a doxastic evidence scenario $(x, e, V)$, the semantic clauses for the primitive propositions and Boolean connectives are as before, while for the modalities we have:
$$
\begin{array}{lcl}
(\cM,x,e, V) \models E \phi & \textrm{ iff } & I_{e}(x) \subseteq \val{\phi}^{e,V} \\
(\cM,x,e, V) \models K \phi & \textrm{ iff } & U_e \subseteq \val{\phi}^{e,V} \\
(\cM,x,e, V) \models B \phi & \textrm{ iff } & V \subseteq \val{\phi}^{e,V} \\
\end{array}
$$
where $\val{\phi}_\cM^{e,V} = \{x \in U_{e} \: : \: (\cM, x,e, V) \models \phi\}$.
Thus, $K$ and $E$ are interpreted essentially as before, while the belief modality $B$ quantifies universally over $V$.
Intuitively, the set $V$
might
be interpreted as the agent's
``conjecture'' about
how the world is,
given the evidence $e$, and
the requirements that $V\not =\emptyset$ and $V \subseteq U_{e}$ guarantee
that the agent does not believe inconsistencies and they believe that their evidence coheres with the world, respectively.\footnote{One can also study agents with possibly inconsistent beliefs in a similar way by simply eliminating the requirement $V\not =\emptyset$.
}
This
corresponds to the very standard ``knowledge-implies-belief'' principle: that is, it makes valid the scheme $K\phi \lthen B\phi$. 
Note also that, just like $U_e$, the doxastic range $V$ is state-independent, which guarantees the validity of the strong introspection principles given in Table \ref{tbl:belief} and Lemma \ref{lem:theorems}.\footnote{More general semantics for $\cL_{E, K, B}$ that do not validate these introspection principles can be obtained by interpreting $B$ with respect to a family of parametrized relations $R = \{R_{e}\}_{e \in \cE}$, where each $R_{e} \subseteq X^{2}$, rather than a fixed $V$ given in a doxastic evidence scenario. Due to the page limit, we leave the details of such a generalization for the extended version of this paper.}

Another constraint one might impose on doxastic evidence scenarios $(x,e,V)$ is the following:
\begin{enumerate}
\item[(E2)]
$y \in V \rimp I_{e}(y) \subseteq V$.
\end{enumerate}
Condition (E2) essentially stipulates that the agent takes evidence entailment seriously: if they consider it possible that the state of evidence leaves open those worlds in $I_{e}(y)$, then they consider each such world possible too. This validates the scheme $B\phi \lthen BE\phi$: if the agent believes $\phi$ then they believe that the evidence entails $\phi$. So it's a kind of ``have responsible beliefs'' constraint: you should only believe that which you believe is entailed by the evidence. 

This condition bears a close resemblance to a principle suggested by Stalnaker \cite{Stalnaker06}, which he called ``strong belief'', namely: $B\phi \lthen BK\phi$, if you belief $\phi$ then you believe that you know it. This essentially makes belief subjectively indistinguishable from knowledge. In the special context of subset space models (Observation \ref{obs:ssm}), or more generally whenever the $K$ and $E$ modalities collapse, our (E2) principle just \textit{is} Stalnaker's ``strong belief'' principle. 
But in general (E2) is weaker: you may believe many things without believing 
that
you \textit{know} them---that is, that they are entailed by \textit{every} interpretation of the evidence---%
instead, what (E2) says is that anything you believe is entailed by \textit{those interpretations of the evidence that you consider possible}.

Even this weaker form of Stalnaker's principle may seem too restrictive, however. Interestingly, it is possible to drop it as a \textit{constraint} on doxastic evidence scenarios without abandoning the intuition entirely. Suppose $(x,e,V)$ is a doxastic evidence scenario; let $V^{1} = V$, and define, for $k > 1$,
$$V^{k} = \bigcup_{y \in V^{k-1}} I_{e}(y)$$
and
$$V^{\infty} = \bigcup_{k=1}^{\infty} V^{k}.$$
Then it is easy to see that $V^{1} \subseteq V^{2} \subseteq \cdots$ is a nested increasing sequence of sets, and $V^{\infty}$ actually does satisfy (E2)---in fact it's the smallest set containing $V$ with this property. We might then interpret $V^{\infty}$ as representing the agent's most ``conservative'' beliefs (so the fact that they satisfy the ``responsibility'' constraint, (E2), makes some sense), whereas $V = V^{1}$ represents the agent's least conservative ``conjecture'', with the sequence $V^{2} \subseteq V^{3} \subseteq \cdots$ bridging the gap between these extremes in a series of discrete jumps or ``levels'' of belief. This is related to the idea of using \emph{plausibility rankings} on possible worlds in order to produce a sequence of beliefs, starting with the ``strongest'' beliefs and gradually weakening them by including less plausible (though still possible) worlds \cite{vD06,HoekMeyer92}.
For example, if we apply this idea to Example \ref{exa:will1} starting with the initial conjecture $V^{1} = (10, 20)$ (corresponding to the belief that the hand of the clock is pointing between the $2$ and the $4$), it is easy to see that $V^{2} = (5,25)$, $V^{3} = (2.5, 27.5)$, $\ldots$, and $V^{\infty} = (0,30)$.
A systematic development of this ``ranked belief'' framework in the context of evidence models is left to the full paper.


\subsection{Soundness and Completeness for $\cL_{E, K, B}$}
In order to distinguish the semantics of $\cL_{E, K, B}$ given with respect to doxastic evidence scenarios from those proposed for $\cL_{E, K}$ in Section \ref{section:evidence-knowledge}, we call the former {\em doxastic-evidence semantics}.  Satisfiability and validity  of a formula in doxastic-evidence semantics is defined the same way as given in Section \ref{section:evidence-knowledge}.

The weakest logic of evidence, knowledge, and belief we consider in this paper, denoted $\EKB$,
is obtained
by strengthening $\EK$ with the additional axiom schemes given in Table \ref{tbl:belief}.
\begin{table}[htp]
\begin{center}
\begin{tabularx}{\textwidth}{>{\hsize=.6\hsize}X>{\hsize=1.3\hsize}X>{\hsize=1.1\hsize}X}
\toprule
(K$_B$) & $B(\varphi\rightarrow \psi)\rightarrow (B\varphi\rightarrow B\psi)$ &  Distribution of belief\\
(D$_B$) & $B\varphi \rightarrow \neg B \neg \varphi $ & Consistency of belief \\
(sPI) & $B\varphi\rightarrow KB\varphi$ & Strong positive introspection\\
(KB) & $K\phi \lthen  B\phi$ & Knowledge implies belief\\
\bottomrule
\end{tabularx}
\end{center}
\caption{Additional axiom schemes for $\EKB$}\label{tbl:belief}
\end{table}%
\begin{lemma}\label{lem:theorems}
Nec$_B$ and $\neg B\varphi \rightarrow K\neg B\varphi$ (strong negative introspection) are derivable in $\mathsf{EKB}$.
\end{lemma}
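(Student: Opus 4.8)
The plan is to derive both facts inside $\mathsf{EKB}$ using the axioms of Table \ref{tbl:belief} together with the fact that $K$ is already known to be an $\mathsf{S5}$ modality (as $\EK \subseteq \EKB$ contains $\mathsf{S5}_K$). For Necessitation of $B$, I would argue as follows. Suppose $\proves_{\mathsf{EKB}} \varphi$. Since $\EK$ contains $\mathsf{S5}_K$ and hence (Nec$_K$), we get $\proves K\varphi$. Now I would use the interaction axiom (KB), i.e.\ $K\varphi \lthen B\varphi$, together with modus ponens to conclude $\proves B\varphi$. This is the short, clean route: (Nec$_B$) is not taken as primitive precisely because it follows from (Nec$_K$) via the bridge axiom (KB).

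For strong negative introspection, $\neg B\varphi \lthen K\neg B\varphi$, the approach is to exploit (sPI) for the positive case and then transfer it to the negative case using the $\mathsf{S5}$ (specifically the $5_K$/negative-introspection) behavior of $K$. The key observation is that, by (sPI), $\proves B\psi \lthen KB\psi$ holds for \emph{every} formula $\psi$. The plan is to contrapose and massage this together with properties of $K$. Concretely, I would first establish the auxiliary scheme $\M\varphi \lthen K\M\varphi$ where $\M = \neg K \neg$, which is just (5$_K$) rewritten, giving knowledge of one's own epistemic possibilities. The main work is to show $\proves \neg B\varphi \lthen K\neg B\varphi$, and here the natural lever is that (sPI) says believing is ``known,'' so \emph{not} believing should likewise be ``known,'' provided $K$ distributes and is negatively introspective.

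The cleanest derivation I foresee: from (sPI), $\proves B\varphi \lthen KB\varphi$; contraposing yields $\proves \neg K B\varphi \lthen \neg B\varphi$, i.e.\ $\proves \M\neg B\varphi \lthen \neg B\varphi$ after pushing the negation through ($\neg K B\varphi = \M \neg B\varphi$ is not literally that, so more carefully: $\neg KB\varphi$ is the dual statement). Rather than fight the duals, I would instead run the standard modal argument: since $K$ is $\mathsf{S5}$, we have $\proves \neg K\chi \lthen K \neg K \chi$ for any $\chi$ (this is (5$_K$)). Taking $\chi := B\varphi$ gives $\proves \neg K B\varphi \lthen K\neg K B\varphi$. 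Then (sPI) gives $\proves B\varphi \lthen KB\varphi$, whose contrapositive is $\proves \neg KB\varphi \lthen \neg B\varphi$; applying (Nec$_K$), (K$_K$), and normal modal reasoning lifts this under $K$ to $\proves K\neg KB\varphi \lthen K\neg B\varphi$. Chaining these two implications yields $\proves \neg KB\varphi \lthen K\neg B\varphi$, and composing with the contrapositive of (sPI) once more (to pass from $\neg B\varphi$ to $\neg KB\varphi$) completes $\proves \neg B\varphi \lthen K\neg B\varphi$.

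I expect the main obstacle to be bookkeeping with the dual modalities and making sure each step is a legitimate normal-modal derivation rather than a semantic shortcut: specifically, justifying the ``lifting under $K$'' step, which requires (Nec$_K$) applied to the $\EKB$-theorem obtained from contraposing (sPI), followed by (K$_K$) to distribute $K$ over the implication. The subtlety is that one must feed the derived theorem (not an assumption) into (Nec$_K$), so I would take care to present the contrapositive of (sPI) as a genuine theorem of $\mathsf{EKB}$ before necessitating it. Everything else is routine propositional and normal-modal manipulation, and (Nec$_B$) itself is immediate from (Nec$_K$) plus (KB).
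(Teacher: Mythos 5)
Your derivation follows essentially the same route as the paper, whose proof is just the one-line observation that (Nec$_B$) follows from (Nec$_K$) and (KB), and that strong negative introspection follows from $\mathsf{S5}_K$ and (sPI); your write-up correctly fleshes out both halves. One slip in the final step of the second half: to pass from $\neg B\varphi$ to $\neg KB\varphi$ you need the implication $\neg B\varphi \rightarrow \neg KB\varphi$, which is the contrapositive of \emph{factivity} (T$_K$) instantiated at $B\varphi$ (i.e., of $KB\varphi \rightarrow B\varphi$), not the contrapositive of (sPI) --- the latter is $\neg KB\varphi \rightarrow \neg B\varphi$, which points the wrong way. Since (T$_K$) is part of $\mathsf{S5}_K$ the fix is immediate and the rest of your chain (using (5$_K$) for $\neg KB\varphi \rightarrow K\neg KB\varphi$, and (Nec$_K$) plus (K$_K$) to lift the contrapositive of (sPI) under $K$) is exactly right.
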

\commentout{
\begin{proof}
Nec$_B$ follows from Nec$_K$ and (KB), and $\neg B\varphi \rightarrow K\neg B\varphi$ follows $\mathsf{S5}_K$ and (sPI).
\end{proof}}

\begin{thm}\label{thm:soundness:EKB}
$\EKB$ is a sound axiomatization of $\L_{E,K,B}$ with respect to the class of evidence models satisfying (E1) under doxastic-evidence semantics.
\end{thm}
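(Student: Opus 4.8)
The plan is to build on the soundness of $\EK$ (Theorem \ref{thm:soundness:EK}) and then verify, one scheme at a time, the validity of the four new axioms of Table \ref{tbl:belief} under doxastic-evidence semantics. First I would observe that the semantic clauses for $E$ and $K$ at a doxastic evidence scenario $(x,e,V)$ are identical to those in Section \ref{section:evidence-knowledge} except that the truth set $\val{\cdot}^{e}$ is uniformly replaced by $\val{\cdot}^{e,V}$. Since $e$, $U_e$, and $V$ are all held fixed throughout the evaluation of any single formula, the soundness computations for the $\EK$-fragment carry over verbatim with $\val{\cdot}^{e,V}$ in place of $\val{\cdot}^{e}$; in particular the necessitation rules still preserve validity, because if $\varphi$ is valid then $\val{\varphi}^{e,V} = U_e$ for every doxastic evidence scenario (each $(y,e,V)$ with $y \in U_e$ being itself such a scenario). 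Hence it suffices to establish validity of (K$_B$), (D$_B$), (sPI), and (KB), after which the derived principles Nec$_B$ and strong negative introspection (Lemma \ref{lem:theorems}) need no separate treatment.

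For each new scheme I would fix an evidence model $\cM$ satisfying (E1) and a doxastic evidence scenario $(x,e,V)$ and argue directly from the semantics. The three schemes (KB), (K$_B$), (D$_B$) turn on the two defining constraints $V \subseteq U_e$ and $V \neq \emptyset$. For (KB): if $(x,e,V) \models K\varphi$ then $U_e \subseteq \val{\varphi}^{e,V}$, and since $V \subseteq U_e$ we get $V \subseteq \val{\varphi}^{e,V}$, i.e.\ $(x,e,V) \models B\varphi$. For (K$_B$): from $V \subseteq \val{\varphi \lthen \psi}^{e,V} = (U_e \setminus \val{\varphi}^{e,V}) \cup \val{\psi}^{e,V}$ together with $V \subseteq \val{\varphi}^{e,V}$ and $V \subseteq U_e$, one concludes $V \subseteq \val{\psi}^{e,V}$. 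For (D$_B$): if $V \subseteq \val{\varphi}^{e,V}$, then choosing any $y \in V$ (possible since $V \neq \emptyset$) gives $y \notin \val{\lnot\varphi}^{e,V}$, so $V \not\subseteq \val{\lnot\varphi}^{e,V}$, i.e.\ $(x,e,V) \models \lnot B \lnot \varphi$.

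The step I expect to carry the most weight is (sPI), $B\varphi \lthen KB\varphi$, which hinges on the state-independence of $V$ emphasized in the text. The key observation is that the truth value of $B\varphi$ at $(y,e,V)$ --- namely whether $V \subseteq \val{\varphi}^{e,V}$ --- does not depend on the world $y$ at all, since both $V$ and $\val{\varphi}^{e,V}$ are determined by $e$ and $V$ alone. Consequently $\val{B\varphi}^{e,V}$ is either all of $U_e$ or empty. If $(x,e,V) \models B\varphi$ then it is nonempty, hence equals $U_e$, so $U_e \subseteq \val{B\varphi}^{e,V}$, i.e.\ $(x,e,V) \models KB\varphi$. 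The only subtlety is to make this ``constant truth value across worlds'' observation precise, as $\val{B\varphi}^{e,V}$ is defined as the subset of $U_e$ cut out by a condition that happens not to vary with the world; once that is recorded, the argument is immediate, and an entirely analogous remark underlies the derivability of strong negative introspection.
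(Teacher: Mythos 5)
Your proposal is correct and follows essentially the same route as the paper's own proof: inherit soundness of the $\EK$-fragment from Theorem \ref{thm:soundness:EK} (with $\val{\cdot}^{e,V}$ in place of $\val{\cdot}^{e}$) and verify (K$_B$), (D$_B$), (sPI), and (KB) directly from the constraints $\emptyset \neq V \subseteq U_e$, with (sPI) resting on the state-independence of the clause for $B$ exactly as the paper argues (the paper phrases it as: for any $y \in U_e$, $(y,e,V)$ is a doxastic evidence scenario and $V \subseteq \val{\varphi}^{e,V}$ already gives $(y,e,V) \models B\varphi$, which is your ``constant truth value across worlds'' observation). No gaps.
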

\commentout{
\begin{proof}
Let $\cM = (X, \cE, I, v)$ be a doxastic evidence model satisfying  (E1)  and $(x, e, V)$ an evidence scenario of $\cM$, and $\varphi, \psi\in \L_{E,K,B}$. The validity proofs for the axiom shemes in $\EK$ are given in Theorem \ref{thm:soundness:EK}.  Here we present the validity proofs of the schemes in Table \ref{tbl:belief}.

\begin{itemize}
\item[(K$_{B}$)] Suppose $(x, e, V)\models B(\varphi\lthen \psi)$ and $(x, e, V)\models B\varphi$. This means $V \subseteq \val{\phi\lthen \psi}^{e,V}= (U_e\setminus  \val{\phi}^{e,V})\cup \val{\psi}^{e,V}$ and $V\subseteq \val{\phi}^{e,V}$.  As  $V \subseteq U_e$, we obtain  $V \subseteq \val{\psi}^{e,V}$, i.e., $(x, e, V)\models B\psi$.

\item[(D$_{B}$)]  Suppose $(x, e, V)\models B\varphi$,  i.e.,  $V \subseteq \val{\phi}^{e,V}$. Since $V\not =\emptyset$, there is $y\in V$ such that $(y, e, V)\models \varphi$, that is, $(y, e, V)\not \models \neg \varphi$ (since $V\subseteq U_e$, $(y, e, V)$ is guaranteed to be a doxastic evidence scenario). Therefore, $V \not \subseteq \val{\neg \phi}^{e,V}$, i.e., $(x, e, U)\not \models B\neg \varphi$, i.e., $(x, e, U)\models \neg B\neg \varphi$.

\item[(sPI)]  Suppose $(x, e, V) \models B\varphi$, i.e.,  $V \subseteq \val{\phi}^{e,V}$ and let $y\in U_e$.  It is then guaranteed that  $(y, e, V)$ is a doxastic evidence scenario. As $V \subseteq \val{\phi}^{e,V}$,
by definition we have
$(y, e, V)\models B\varphi$. As $y$ has been chosen arbitrarily from $U_e$, we conclude that $(x, e, V) \models KB\varphi$.

\item[(KB)]  Suppose $(x, e, V) \models K\varphi$, i.e.,  $U_e = \val{\phi}^{e,V}$. Then, since $V\subseteq U_e$, we have $V\subseteq  \val{\phi}^{e,V}$, i.e., $(x, e, V)\models B\varphi$. \qedhere
\end{itemize}
\end{proof}}

The completeness proof again relies on a standard Kripke-style interpretation of $\L_{E,K,B}$ in relational models and the corresponding relational completeness result. 

A \defin{relational doxastic evidence model} $M=(X, R_E, R_B, v)$  is a relational evidence model $(X, R_E, v)$ equipped with an additional binary relation $R_B$ on $X$ such that  for all $x, y\in X$, $R_B(x)\not =\emptyset$ and $R_B(x)=R_B(y)$. The language $\L_{E,K,B}$ is interpreted in a relational doxastic evidence model $M=(X, R_E, R_B, v)$ as before for $E$ and $K$ (see Section \ref{subsec:ax:EK}); for $B$ we have:
$$
 \begin{array}{lcl}
(M,x) \models B \phi & \textrm{ iff } & R_B(x)\subseteq \sval{\phi}_{M}.\\
\end{array}
$$
As usual we omit mention of the model when it is clear from context.

\begin{thm}\label{thm:comp:reln2}
$\EKB$ is a sound and complete axiomatization of $\L_{E,K,B}$ with respect to the class of relational doxastic evidence models. 
\end{thm}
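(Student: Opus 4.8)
The plan is to prove soundness by routine verification and completeness by a canonical model construction, restricted to a single cluster so that the universal modality $K$ behaves as intended.

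For soundness, the $\EK$-fragment is handled exactly as in Theorem \ref{thm:soundness:EK}, reading off validity from the fact that in a relational doxastic evidence model $K$ is the universal modality over $X$ and $E$ is the box of the reflexive relation $R_E$. For the four schemes of Table \ref{tbl:belief}: (K$_B$) holds because $B$ is a normal box of $R_B$; (D$_B$) holds because $R_B(x) \neq \emptyset$ (seriality); (KB) holds because $R_B(x) \subseteq X = R_K(x)$, so $X = \sval{\phi}$ forces $R_B(x) \subseteq \sval{\phi}$; and (sPI) holds precisely because $R_B$ is \emph{state-independent} ($R_B(x) = R_B(y)$ for all $x,y$), which makes the truth value of $B\phi$ constant across $X$, whence $B\phi$ entails $KB\phi$. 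I would state these checks tersely.

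For completeness I would build the standard canonical model $M^c$ whose worlds are maximal $\EKB$-consistent sets, with canonical accessibility relations $R_K, R_E, R_B$ defined in the usual way (e.g.\ $\Gamma R_B \Delta$ iff $\{\phi : B\phi \in \Gamma\} \subseteq \Delta$), so that the Existence/Truth Lemma $\phi \in \Gamma \iff (M^c,\Gamma) \models \phi$ holds. From the axioms I read off: $R_K$ is an equivalence relation ($\mathsf{S5}_K$), $R_E$ is reflexive (T$_E$), and $R_B$ is serial (D$_B$). The two crucial inclusions are $R_E \subseteq R_K$ (from (KE): if $\Gamma R_E \Delta$ and $K\phi \in \Gamma$, then $E\phi \in \Gamma$, hence $\phi \in \Delta$) and $R_B \subseteq R_K$ (from (KB), by the same argument). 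Now fix a maximal consistent $\Gamma_0$ with $\neg\phi \in \Gamma_0$ (available whenever $\not\proves_{\EKB}\phi$), and restrict $M^c$ to the single $R_K$-cluster $C = R_K(\Gamma_0)$. Because $R_E \subseteq R_K$ and $R_B \subseteq R_K$ and $R_K$ is an equivalence relation, $C$ is closed under all three relations, i.e.\ it is a generated submodel, so the Truth Lemma transfers to $M^c{\restriction}C$. Inside $C$ we have $R_K = C \times C$, so $K$ is genuinely the universal modality over the space $X := C$, matching the intended semantics.

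It remains to check that $M^c{\restriction}C = (C, R_E{\restriction}C, R_B{\restriction}C, v^c)$ is a \emph{relational doxastic evidence model}: $R_E{\restriction}C$ is reflexive, and $R_B{\restriction}C$ is serial (from D$_B$, and $R_B(\Gamma) \subseteq R_K(\Gamma) = C$ keeps successors inside $C$). The one genuinely interactive point—and the step I expect to be the main obstacle—is verifying state-independence of $R_B$ within $C$, i.e.\ $R_B(\Gamma) = R_B(\Delta)$ for all $\Gamma, \Delta \in C$. This is where (sPI) and the derived strong negative introspection $\neg B\phi \lthen K\neg B\phi$ (Lemma \ref{lem:theorems}) do the work: since $K$ is universal on $C$, (sPI) gives $B\phi \in \Gamma \Rightarrow B\phi \in \Delta$ and strong negative introspection gives the converse, so the ``box set'' $\{\phi : B\phi \in \Gamma\}$ is one and the same for every $\Gamma \in C$; hence the canonical successor set $R_B(\Gamma)$ is constant across $C$. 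Thus $M^c{\restriction}C$ is a relational doxastic evidence model refuting $\phi$ at $\Gamma_0$, establishing completeness.
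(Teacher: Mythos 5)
Your proposal is correct and follows essentially the same route as the paper, which proves this theorem by a routine soundness check plus a canonical model construction in which (sPI) (together with the derived strong negative introspection) forces $R_B(x)=R_B(y)$ throughout the relevant $R_K$-cluster. Your write-up simply fills in the details the paper leaves implicit (restriction to the generated $R_K$-cluster so that $K$ is universal, the inclusions $R_E\subseteq R_K$ and $R_B\subseteq R_K$, and seriality of $R_B$), all of which are handled correctly.
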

\begin{proof}
While soundness is a matter of routine validity check, completeness follows from a fairly straightforward canonical model construction where (sPI) guarantees that for all $x, y\in X$, $R_B(x)=R_B(y)$ in the canonical model (see, e.g., \cite[Chapters 4 \& 7]{BdRV01}). For a similar construction for topological subset space semantics, see \cite[pp. 20-21, full paper]{BO17}.
\end{proof}

Given a doxastic relational evidence model $M=(X, R_E, R_B, v)$, we construct the evidence model $\cM_M=(X, \{e\},  I, v)$ satisfying (E1) exactly the same way as in Section \ref{subsec:ax:EK}. Let $V = R_B(x)$ for any $x\in X$ and recall that $U_e=X$. Therefore, as $\emptyset \not =V \subseteq U_e$, every tuple of the form $(x, e, V)$ is a doxastic evidence scenario in $\cM_M$.

\begin{lemma}\label{lem:modal.equiv2}
Let $\cM=(X, R_E, R_B, v)$ be a relational doxastic evidence model. Then,  for all $\varphi\in \L_{E,K,B}$  and $x\in X$, we have 
$$M, x\models \varphi \mbox{ iff } \cM_M, (x, e, V)\models \varphi.$$ 
\end{lemma}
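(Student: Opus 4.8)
The plan is to proceed by induction on the structure of $\varphi$, exactly mirroring the proof of Lemma~\ref{lem:modal.equiv1} and adjoining a single new case for the belief modality. First I would dispatch the base case for primitive propositions and the inductive cases for the Boolean connectives, which are immediate: the truth of such formulas depends only on the actual state $x$ and the valuation $v$, both of which are shared by $M$ and $\cM_M$. These cases also let me package the inductive hypothesis in the convenient form $\sval{\varphi}_M = \val{\varphi}^{e,V}$, i.e.\ the two truth sets coincide. This repackaging is legitimate precisely because $U_e = X$, so both sets live over the same domain $X$ and the pointwise equivalence genuinely equates them rather than equating them merely on a subset.

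For the modal cases, $E\varphi$ and $K\varphi$ go through verbatim as in Lemma~\ref{lem:modal.equiv1}. Since $I_e(x) = R_E(x)$ by construction, the clause $(M,x)\models E\varphi$ iff $R_E(x) \subseteq \sval{\varphi}_M$ translates under the inductive hypothesis to $I_e(x) \subseteq \val{\varphi}^{e,V}$, i.e.\ $(\cM_M, x, e, V)\models E\varphi$; and since $U_e = X$, the clause $(M,x)\models K\varphi$ iff $X = \sval{\varphi}_M$ translates to $U_e = \val{\varphi}^{e,V}$, i.e.\ $(\cM_M, x, e, V)\models K\varphi$. The only genuinely new case is $B\varphi$, and here the key is the choice $V = R_B(x)$, which is well-defined exactly because a relational doxastic evidence model requires $R_B(x) = R_B(y)$ for all $x,y \in X$. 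Combining this with the inductive hypothesis yields the chain $(M,x)\models B\varphi$ iff $R_B(x)\subseteq \sval{\varphi}_M$ iff $V \subseteq \val{\varphi}^{e,V}$ iff $(\cM_M, x, e, V)\models B\varphi$.

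I do not anticipate a real obstacle: this lemma is a routine extension of Lemma~\ref{lem:modal.equiv1}, and the substantive work has already been done in setting up the construction. The two points to check carefully are purely bookkeeping—first, that $V$ is chosen to match $R_B$ exactly (so that the $B$ clauses align), and second, that $(x,e,V)$ is a legitimate doxastic evidence scenario, which holds since $\emptyset \neq V \subseteq U_e = X$ as already noted before the lemma statement. The mild subtlety worth flagging is the same one that underlies the $K$ case: one must invoke $U_e = X$ to ensure that $\sval{\varphi}_M$ and $\val{\varphi}^{e,V}$ are comparable as sets over a common domain, so that the inductive hypothesis delivers an honest equality of truth sets.
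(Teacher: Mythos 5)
Your proposal is correct and matches the paper's own proof essentially verbatim: both proceed by induction on $\varphi$, reuse the cases of Lemma~\ref{lem:modal.equiv1} for the atoms, Booleans, $E$, and $K$, and handle the one new case $B\varphi$ via the chain $R_B(x)\subseteq \sval{\varphi}_M$ iff $V \subseteq \val{\varphi}^{e,V}$ using $V = R_B(x)$ and the inductive hypothesis. Your additional bookkeeping remarks (constancy of $R_B$, $(x,e,V)$ being a genuine doxastic evidence scenario since $\emptyset \neq V \subseteq U_e = X$) are exactly the points the paper establishes just before the lemma statement.
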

\commentout{
\begin{proof} 
The proof follows by induction on the structure of $\varphi$ similarly to  the proof of Theorem \ref{lem:modal.equiv1}, where cases for the primitive propositions, the Boolean connectives, $E\varphi$, and $K\varphi$ are presented. So assume inductively that the result holds for $\varphi$; we
show that it holds also for $B\varphi$.

Case for $B\varphi$:
\begin{align}
(M, x)\models B\varphi & \mbox{ iff }  R_B(x)\subseteq \sval{\varphi}_\cM \notag\\
& \mbox{ iff }  V\subseteq \val{\varphi}^{e,V} \tag{since $R_B(x)=V$, IH}\\
& \mbox{ iff } (\cM_M, x, e, V)\models B\varphi \notag \qedhere
\end{align}
\end{proof}}

\begin{corollary}\label{cor:comp:evi2}
$\EKB$ is a complete axiomatization of $\L_{E,K,B}$ with respect to the class of evidence models satisfying (E1) under doxastic-evidence semantics.
\end{corollary}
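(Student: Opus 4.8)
The plan is to mirror the proof of Corollary \ref{cor:comp:evi1} exactly, transferring the relational completeness result of Theorem \ref{thm:comp:reln2} to evidence models by means of the truth-preserving correspondence established in Lemma \ref{lem:modal.equiv2}. First I would take an arbitrary formula $\phi \in \L_{E,K,B}$ with $\not\proves_{\EKB} \phi$. By the (contrapositive of the) completeness half of Theorem \ref{thm:comp:reln2}, there is a relational doxastic evidence model $M = (X, R_E, R_B, v)$ and a state $x \in X$ such that $(M, x) \not\models \phi$.

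Next I would invoke the construction given immediately before Lemma \ref{lem:modal.equiv2}: from $M$ we build the evidence model $\cM_M = (X, \{e\}, I, v)$ with $I_e(x) = R_E(x)$, which (as in Section \ref{subsec:ax:EK}) satisfies (E1) and has $U_e = X$. Setting $V = R_B(x)$---well-defined and state-independent since $R_B$ is constant across states by hypothesis---the constraints $\emptyset \neq V$ and $V \subseteq U_e = X$ guarantee that $(x, e, V)$ is a legitimate doxastic evidence scenario in $\cM_M$. Lemma \ref{lem:modal.equiv2} then gives $(\cM_M, x, e, V) \not\models \phi$, thereby exhibiting an evidence model satisfying (E1) that refutes $\phi$ under doxastic-evidence semantics. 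Since $\phi$ was an arbitrary non-theorem, this establishes completeness.

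Since all of the substantive work lives in Theorem \ref{thm:comp:reln2} (the relational completeness, obtained via a canonical model construction in which (sPI) forces $R_B$ to be constant across worlds) and in Lemma \ref{lem:modal.equiv2} (the inductive truth-preservation, whose only new case over Lemma \ref{lem:modal.equiv1} is the $B$-clause, handled using $R_B(x) = V$), the corollary itself is a routine assembly and presents no genuine obstacle. The single point requiring care is verifying that the tuple $(x, e, V)$ produced by the construction really is a doxastic evidence scenario; but this follows directly from $R_B(x) \neq \emptyset$ and $U_e = X$, so the argument goes through verbatim as in Corollary \ref{cor:comp:evi1}.
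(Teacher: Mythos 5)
Your argument is exactly the paper's proof: the paper discharges this corollary by citing Theorem \ref{thm:comp:reln2} and Lemma \ref{lem:modal.equiv2} and noting the argument is the same as for Corollary \ref{cor:comp:evi1}, which is precisely the assembly you carry out, including the check that $(x,e,V)$ with $V=R_B(x)$ is a legitimate doxastic evidence scenario. The proposal is correct and takes the same route.
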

\begin{proof}
Similar to the proof of Corollary \ref{cor:comp:evi1}, by Theorem \ref{thm:comp:reln2} and Lemma \ref{lem:modal.equiv2}. \qedhere

\end{proof}

We also provide an axiomatization of $\cL_{E, K, B}$ for evidence models that satisfy (E2) in addition to (E1).

\begin{thm}\label{thm:comp:reln3}
$\EKB+(B\varphi\lthen BE\varphi)$ is a sound and complete axiomatization of $\L_{E,K,B}$ with respect to the class of evidence models satisfying (E1) and (E2) under doxastic-evidence semantics.
\end{thm}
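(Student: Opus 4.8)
The plan is to reuse the three-stage strategy behind Corollaries~\ref{cor:comp:evi1} and \ref{cor:comp:evi2}: verify soundness directly, prove completeness with respect to a suitably constrained class of \emph{relational} doxastic evidence models, and then transfer via the construction $\cM_M$ and the modal equivalence of Lemma~\ref{lem:modal.equiv2}. For soundness, the $\EKB$ schemes are already discharged by Theorem~\ref{thm:soundness:EKB}, so it remains only to check $(B\varphi\lthen BE\varphi)$ in evidence models satisfying (E1) and (E2). Suppose $(x,e,V)\models B\varphi$, i.e.\ $V\subseteq\val{\varphi}^{e,V}$, and let $y\in V$; since $V\subseteq U_e$, the tuple $(y,e,V)$ is itself a doxastic evidence scenario, and by (E2) we have $I_e(y)\subseteq V\subseteq\val{\varphi}^{e,V}$, so $(y,e,V)\models E\varphi$. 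As $y\in V$ was arbitrary, $V\subseteq\val{E\varphi}^{e,V}$, that is, $(x,e,V)\models BE\varphi$.

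For the relational stage I would first identify the frame correspondent of the new axiom. Reading $B$ and $E$ as the box modalities for $R_B$ and $R_E$, the scheme $B\varphi\lthen BE\varphi$ has the form $[R_B]\varphi\lthen[R_B][R_E]\varphi$, a Sahlqvist formula whose first-order correspondent is the composition condition $R_B\circ R_E\subseteq R_B$, i.e.\ $xR_By$ and $yR_Ez$ jointly imply $xR_Bz$. Being Sahlqvist, the scheme is canonical, so the canonical model for $\EKB+(B\varphi\lthen BE\varphi)$ automatically validates this condition. Combined with the canonical construction underlying Theorem~\ref{thm:comp:reln2}---where (sPI) forces $R_B$ to be constant and (T$_E$) forces $R_E$ to be reflexive---this yields soundness and completeness of $\EKB+(B\varphi\lthen BE\varphi)$ with respect to the class of relational doxastic evidence models additionally satisfying $R_B\circ R_E\subseteq R_B$.

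Finally I would transfer to evidence models. Given such a model $M=(X,R_E,R_B,v)$, form $\cM_M=(X,\{e\},I,v)$ with $I_e(x)=R_E(x)$ exactly as in Section~\ref{subsec:ax:EK}, and set $V=R_B(x)$, which is well-defined because $R_B$ is constant. As already noted, $\cM_M$ satisfies (E1); moreover, since $R_B$ has the single value $V$, the condition $R_B\circ R_E\subseteq R_B$ says precisely that $y\in V$ implies $R_E(y)\subseteq V$, i.e.\ $I_e(y)\subseteq V$, which is exactly (E2). The proof of Lemma~\ref{lem:modal.equiv2} never invokes (E2), so it applies unchanged and gives $M,x\models\varphi$ iff $\cM_M,(x,e,V)\models\varphi$. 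Completeness then follows as in Corollary~\ref{cor:comp:evi2}: any formula not provable in $\EKB+(B\varphi\lthen BE\varphi)$ is refuted in some relational doxastic evidence model with $R_B\circ R_E\subseteq R_B$, hence in the corresponding evidence model satisfying (E1) and (E2).

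The step that needs the most care is the correspondence argument: confirming that the canonical model simultaneously realizes all three frame conditions (reflexivity of $R_E$, constancy of $R_B$, and $R_B\circ R_E\subseteq R_B$)---which is unproblematic since each is induced by a canonical axiom---and, crucially, checking that once $R_B$ is constant the composition condition collapses exactly to (E2) under the translation $I_e=R_E$, $V=R_B(x)$. Everything else is a routine reuse of the machinery already developed for $\EKB$.
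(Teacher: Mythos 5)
Your proposal is correct and follows essentially the same route as the paper: a direct semantic check of $(B\varphi\lthen BE\varphi)$ under (E2) for soundness, relational completeness with respect to the frame condition ``$xR_By$ implies $R_E(y)\subseteq R_B(x)$'' (the paper's condition (E2$'$), identical to your $R_B\circ R_E\subseteq R_B$), and transfer through $\cM_M$ via Lemma~\ref{lem:modal.equiv2}. The only difference is that you spell out, via Sahlqvist canonicity, the relational completeness step that the paper dismisses as ``easy to verify,'' and you explicitly note that constancy of $R_B$ collapses the composition condition to (E2)---both accurate elaborations rather than a different argument.
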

\commentout{
\begin{proof}
For soundness: let $\cM = (X, \cE, I, v)$ be an evidence model satisfying (E1) and (E2), $(x, e, V)$ an evidence scenario of $\cM$, and $\varphi$ such that $(x, e, V)\models B\varphi$. This means that $V\subseteq \val{\varphi}^{e,V}$. Now let $y\in V$. Since $V\subseteq U_e$,
$(y, e, V)$ is an evidence scenario. By (E2), we have that $I_e(y)\subseteq V$, thus, $I_e(y)\subseteq \val{\varphi}^{e,V}$, i.e., $(y, e, V)\models E\varphi$. As $y$ has been chosen arbitrarily from $V$, we conclude that $(x, e, V)\models BE\varphi$.

\noindent For completeness:  it is easy to verify that $\EKB+(B\varphi\lthen BE\varphi)$ is complete with respect to the class of relational doxastic models satisfying 
\begin{equation}\label{eqn:E3.reln}
\mbox{for all $x, y\in X$, if $xR_By$ then $R_E(y)\subseteq R_B(x)$}\tag{E2$'$}
\end{equation} 
Moreover, given a relation doxastic model $M=(X, R_E, R_B, v)$ satisfying (\ref{eqn:E3.reln}), $\cM_M$ satisfies (E1) and (E2). The result then follows from Theorem \ref{thm:comp:reln3} and Lemma \ref{lem:modal.equiv2} similar to the proof of Corollary \ref{cor:comp:evi2}.
\end{proof}}


\section{Evidence Models for Knowability} \label{section:evidence-knowability}

The logics we have considered so far have been \textit{static} in the sense that they include no mechanism for an agent to update their information in any way. As a first step toward introducing a dynamic component to our setting, we consider a simple mechanic for \textit{changing} the state of evidence. Perhaps the simplest intuition comes from the case of an agent who takes multiple successive measurements---assuming they remember the results of previous measurements, it seems reasonable to represent the final state of evidence as a combination $e_{1} \oplus \cdots \oplus e_{k}$, where $e_{i}$ is the evidence state corresponding to the $i$th observation.

This is captured formally in the definition of an
\defin{evidence interaction model},
which
is a tuple $\cM=\IModel$ where  $(\cE, \oplus)$ is a meet-semilattice, $\Model$ is an evidence model satisfying (E1), and for all $x\in X$ and finite $\cE'\subseteq \cE$, $I_{\oplus\cE'}(x)=\bigcap_{e\in \cE'}I_e(x)$.
A notion of \emph{evidence parthood}, denoted by $\leq$, is given by
\begin{equation}\label{eqn:e.parthood}
\forall e', e\in \cE(e'\leq e \mbox{ iff } e'\oplus e=e').\tag{EP}
\end{equation}
Moreover, it is not difficult to see
that for all finite $\cE'\subseteq \cE$, $U_{\oplus\cE'}=\bigcap_{e\in \cE'}U_e$.

Note the analogy with topological spaces. A topological space has the form $(X,\mathcal{T})$, where $\mathcal{T}$ is a collection of subsets of $X$ called \emph{opens}, often conceived of as the results of possible measurements. Evidence spaces effectively replace the topology $\mathcal{T}$ with the structure $(\cE, I)$, so that in place of open subsets $U$ of $X$ we have \textit{families} of subsets $\{I_{e}(x) \: : \: x \in X\}$ of $X$, one for each $e \in \cE$. Loosely speaking, topological spaces might be viewed as special cases of evidence spaces where each $I_{e}$ is a constant function (cf.~Observation \ref{obs:ssm}).

This analogy is taken a step further with evidence interaction models, since the closure of $\cE$ under the meet operation $\oplus$ parallels the closure of $\mathcal{T}$ under intersection. Thus, it may be easier to think of $(\cE, \oplus, I)$ as the analog of a \textit{basis} for $X$, rather than a full topology. We can also define a kind of \textit{generalized interior operator} in evidence interactions models, and use it to
articulate a notion of \emph{measurability} corresponding to what the agent \textit{could come to know} after taking a sufficiently good measurement or otherwise obtaining a sufficiently strong piece of evidence
(see \cite{Bjorndahl16,BO17,Bjorndahl18}).
Given an evidence interaction model $\IModel$ and an evidence scenario $(x, e)$, we interpret the propositional variables, Boolean connectives, $K$, and $E$ as before, and for $\Box$ we define
$$(\cM, x,e) \models \Box\phi \; \mbox{iff} \;  \exists e'\in \cE (x \in U_{e \oplus e'}\subseteq \val{\varphi}^e).$$
Thus, $\Box \phi$ holds just in case there is some piece of evidence $e'$ that, when combined with the agent's current evidence $e$, would result in knowledge of $\val{\phi}^{e}$.

\subsection{Soundness and Completeness for $\cL_{E, K, \Box}$}

The logic of evidence, knowledge, and knowability is obtained by strengthening $\EK$ as follows
$$\EKK = \EK +\mathsf{S4}_\Box + \mbox{(K$\Box$)},$$
where (K$\Box$) denotes the axiom scheme $K\varphi\rightarrow \Box\varphi$. 



\begin{thm}\label{thm:reln.comp2}
$\EKK$ is a sound axiomatization of $\L_{E,K,\Box}$ with respect to the class of evidence interaction models.
\end{thm}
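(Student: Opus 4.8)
The plan is to leverage the soundness already established for $\EK$ and reduce the task to the three new $\Box$-schemes plus the knowledge--knowability interaction. Since every evidence interaction model $\cM = \IModel$ is by definition an evidence model satisfying (E1), Theorem~\ref{thm:soundness:EK} immediately yields validity of all $\EK$-axioms, and validity is preserved by the rules of $\EK$ (necessitation and modus ponens). Hence it suffices to check that $\Box$ validates (K$_\Box$), (T$_\Box$) and (4$_\Box$), that (Nec$_\Box$) preserves validity, and that (K$\Box$) is valid. Throughout I would fix such an $\cM$ and an evidence scenario $(x,e)$, and repeatedly use two structural facts that follow directly from the definition: idempotency of $\oplus$ gives $e \oplus e = e$, hence $U_{e \oplus e} = U_{e}$; and the identity $U_{\oplus \cE'} = \bigcap_{f \in \cE'} U_{f}$ yields, for any $e_1, e_2 \in \cE$, the decomposition $U_{e \oplus e_1 \oplus e_2} = U_{e\oplus e_1} \cap U_{e\oplus e_2}$.

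The factivity, interaction, and necessitation cases are essentially witness-chasing. For (T$_\Box$), if $(x,e) \models \Box\varphi$ then some witness $e'$ gives $x \in U_{e\oplus e'} \subseteq \val{\varphi}^e$, so $x \in \val{\varphi}^e$ and $(x,e)\models\varphi$. For (K$\Box$), if $(x,e)\models K\varphi$ then $U_e \subseteq \val{\varphi}^e$; taking the witness $e' = e$ and using $U_{e\oplus e} = U_e$ together with $x \in U_e$ (as $(x,e)$ is a scenario) shows $(x,e)\models\Box\varphi$. For (Nec$_\Box$), if $\varphi$ is valid then $\val{\varphi}^e = U_e$ in every model, and again the witness $e' = e$ shows that $\Box\varphi$ holds everywhere.

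Positive introspection (4$_\Box$) is the one place where the nesting of $\Box$ must be unwound, but it is clean because the defining inclusion $U_{e\oplus e'} \subseteq \val{\varphi}^e$ is world-independent and the truth set is taken relative to the fixed current evidence $e$. Concretely, if $e'$ witnesses $(x,e)\models\Box\varphi$, I would show the same $e'$ witnesses $(x,e)\models\Box\Box\varphi$: for every $y \in U_{e\oplus e'}$ the evidence $e'$ again satisfies $y \in U_{e\oplus e'} \subseteq \val{\varphi}^e$, so $(y,e)\models\Box\varphi$ and hence $y \in \val{\Box\varphi}^e$; thus $x \in U_{e\oplus e'} \subseteq \val{\Box\varphi}^e$, as required.

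I expect the distribution scheme (K$_\Box$) to be the main obstacle, since the existential quantifier over $e'$ in the semantics of $\Box$ means that distributivity is not automatic as it would be for a standard normal box. The idea is to merge the two witnesses using the semilattice structure. Given a witness $e_1$ for $\Box(\varphi\to\psi)$ and a witness $e_2$ for $\Box\varphi$, I would take $e_1 \oplus e_2 \in \cE$ as the witness for $\Box\psi$: the decomposition above gives $U_{e \oplus e_1 \oplus e_2} = U_{e\oplus e_1}\cap U_{e\oplus e_2}$, so $x$ lies in this set, and it is contained in $\val{\varphi\to\psi}^e \cap \val{\varphi}^e \subseteq \val{\psi}^e$, whence $(x,e)\models\Box\psi$. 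This step is exactly where closure of $\cE$ under $\oplus$ and the compatibility condition $I_{\oplus\cE'}(x) = \bigcap_{f\in\cE'} I_f(x)$---through its consequence $U_{\oplus\cE'} = \bigcap_{f \in \cE'} U_{f}$---are indispensable.
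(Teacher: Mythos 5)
Your proposal is correct and follows essentially the same route as the paper's own proof: delegating the $\EK$ schemes to Theorem~\ref{thm:soundness:EK}, using the witness $e'=e$ (via idempotency of $\oplus$) for (Nec$_\Box$) and (K$\Box$), reusing the same witness for (4$_\Box$), and merging the two witnesses via $e_1\oplus e_2$ together with $U_{e\oplus e_1\oplus e_2}=U_{e\oplus e_1}\cap U_{e\oplus e_2}$ for (K$_\Box$). No gaps.
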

\commentout{
\begin{proof}
Let $\cM = \IModel$ be an evidence interaction model and $(x, e)$ an evidence scenario of $\cM$, and $\varphi, \psi\in \L_{E,K,\Box}$. The validity proofs for the axiom shemes in $\EK$ are given in Theorem \ref{thm:soundness:EK}.  Here we present the validity proofs of the additional schemes $\mathsf{S4}_\Box$ and   (K$\Box$). 
\begin{itemize}
\item[(Nec$_\Box$)]  Suppose that $\varphi$ is in valid all evidence interaction models. This implies in particular  that $U_e\subseteq \val{\varphi}^e$. As $e\oplus e=e$, we also have that $U_{e\oplus e}=U_e$. Therefore,  $x\in U_{e\oplus e} \subseteq \val{\varphi}^e$, meaning that $(\cM, x, e)\models \Box\varphi$. As $(\cM, x, e)$ has been chosen arbitrarily, we conclude that $\Box\varphi$ is also valid on all evidence interaction models.

\item[(K$_\Box$)] Suppose (1) $(x, e)\models \Box(\varphi\rightarrow \psi)$ and (2) $(x, e)\models \Box\varphi$.  (1) means $ \exists e'\in \cE (x \in U_{e \oplus e'}\subseteq \val{\varphi\rightarrow \psi}^e)$ and (2) means  $ \exists e''\in \cE (x \in U_{e \oplus e''}\subseteq \val{\varphi}^e)$. Now consider $e\oplus (e'\oplus e'')$. As $(\cE, \oplus)$ is a meet-semilattice, we have $e'\oplus e''\in \cE$ and  $e\oplus (e'\oplus e'') = (e\oplus e') \oplus (e\oplus e'' )$, therefore, $U_{e\oplus (e'\oplus e'')}=U_{e\oplus e'}\cap U_{e\oplus e''}$. We then have, by (1) and (2), repsectively, that  $x \in U_{e\oplus (e'\oplus e'')}\subseteq (U_e\setminus \val{\varphi}^e) \cup \val{\psi}^e$ and $x\in U_{e\oplus (e'\oplus e'')} \subseteq \val{\varphi}^e$. As $U_{e\oplus (e'\oplus e'')} \subseteq U_e$, we obtain $U_{e\oplus (e'\oplus e'')}\subseteq \val{\psi}^e$. Therefore, $(x, e)\models \Box\varphi$.

\item[(T$_\Box$)]  Suppose $(x, e)\models \Box\varphi$, i.e., $ \exists e'\in \cE (x \in U_{e \oplus e'}\subseteq \val{\varphi}^e)$. This implies that $x\in \val{\varphi}^e$, i.e., $(x, e)\models\varphi$.

\item[(4$_\Box$)]  Suppose $(x, e)\models \Box\varphi$, i.e., $ \exists e'\in \cE (x \in U_{e \oplus e'}\subseteq \val{\varphi}^e)$ and let $y\in U_{e \oplus e'}$. Then, by the  assumption, we have that $y\in U_{e \oplus e'}\subseteq \val{\varphi}^e$, meaning that, $(y, e)\models \Box \varphi$. As $y$ has been chosen arbitrarily from $U_{e \oplus e'}$, we obtain $U_{e \oplus e'}\subseteq \val{\Box\varphi}^e$. Therefore, $ \exists e'\in \cE (x \in U_{e \oplus e'}\subseteq \val{\Box\varphi}^e)$, i.e., $(x, e)\models \Box\Box\varphi$.  

\item[(K$\Box$)]   Suppose $(x, e)\models K\varphi$, i.e., $U_e\subseteq \val{\varphi}^e$. As $e\oplus e=e$, thus, $U_{e\oplus e} =U_e$, we obtain $(x, e)\models \Box\varphi$. 
\end{itemize} 
\end{proof}}

Similarly to the previous completeness proofs, we prove the completeness of $\EKK$ via a detour to the standard relational interpretation of $\L_{E,K,\Box}$ and its corresponding  relational completeness. More precisely, we rely on the completeness of $\EKK$---under the standard Kripke semantics---with respect to the class of finite models of the form $M=(X, R_E, R_\Box, v)$ where $R_E$ is reflexive and $R_\Box$ is reflexive and transitive. We call such structures \defin{relational evidence and knowability models}. While $K$ and $E$ are interpreted in a relational evidence and knowability model as before, $\Box$ is interpreted, in the standard way, via the accessibility relation 
$R_\Box$:
$$
 \begin{array}{lcl}
(M,x) \models \Box \phi & \textrm{ iff } & R_\Box(x)\subseteq \sval{\phi}_{M}.\\
\end{array}
$$

\begin{thm}\label{thm:comp:reln4}
$\EKK$ is a sound and complete axiomatization of $\L_{E,K,\Box}$ with respect to the class of finite relational evidence and knowability models. 
\end{thm}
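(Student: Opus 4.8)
The plan is to get soundness by a routine validity check and completeness through the standard canonical-model method, restricted to a single $K$-cluster and then finitized by filtration.

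For soundness I would invoke the soundness half of Theorem~\ref{thm:comp:reln1} for the $\EK$-fragment, and then verify the two new ingredients directly: reflexivity and transitivity of $R_\Box$ validate $\mathsf{S4}_\Box$ exactly as in the textbook frame correspondence, while $(\mathrm{K}\Box)$ holds because $K$ is the universal modality---if $K\varphi$ is true then $\sval{\varphi}=X\supseteq R_\Box(x)$, so $\Box\varphi$ is true as well.

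For completeness, suppose $\not\proves_{\EKK}\varphi$. First I would build the canonical model for $\EKK$ with the three canonical relations $R^c_E, R^c_\Box, R^c_K$. The modal axioms make $R^c_K$ an equivalence relation, $R^c_E$ reflexive, and $R^c_\Box$ reflexive and transitive; crucially, $(\mathrm{KE})$ forces $R^c_E\subseteq R^c_K$ and $(\mathrm{K}\Box)$ forces $R^c_\Box\subseteq R^c_K$. Choosing a maximal consistent set containing $\lnot\varphi$ and passing to the submodel it generates under $R^c_K$, the relation $R^c_K$ becomes \emph{universal} on this submodel (so $K$ is genuinely the universal modality there), and the two inclusions guarantee that the submodel is closed under $R^c_E$ and $R^c_\Box$; hence truth is preserved and $\lnot\varphi$ is still satisfied. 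This already yields completeness with respect to (possibly infinite) relational evidence and knowability models.

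The remaining---and most delicate---step is to finitize by filtrating the generated submodel through the finite set $\Sigma$ of subformulas of $\varphi$, choosing the filtered relations so that each frame condition survives. For $E$ the minimal filtration suffices and automatically preserves reflexivity; for $K$ no relation is needed, since it is read as the universal modality on the finite quotient, its correctness following from surjectivity of the quotient map together with the induction hypothesis. The real work is $\Box$, where I would use the transitive filtration $[w]\,\bar R_\Box\,[w']$ iff $\forall\,\Box\psi\in\Sigma\,(\Box\psi\in w\Rightarrow \psi\in w'\text{ and }\Box\psi\in w')$, and check that it is a genuine filtration that is simultaneously reflexive (via $(\mathrm{T}_\Box)$) and transitive. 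Here the reduction to a single $K$-cluster pays off: because $K$ has become universal, the interaction axioms $(\mathrm{KE})$ and $(\mathrm{K}\Box)$ are automatically satisfied and impose no constraint linking the filtered relations, so the three modalities can be filtrated essentially independently. The Filtration Lemma then delivers a finite relational evidence and knowability model refuting $\varphi$ at the class of $w_0$. I expect this simultaneous preservation of reflexivity-and-transitivity for $\Box$---the classical $\mathsf{S4}$ finite-model-property obstacle---to be the main technical hurdle, with everything else being bookkeeping made easy by the universality of $K$.
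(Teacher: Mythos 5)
Your proposal is correct and follows exactly the route the paper intends for this result (whose proof is omitted from this extended abstract and deferred to the full version): a Goranko--Passy-style canonical model for the universal modality $K$, restriction to a single $R^c_K$-cluster---which is closed under $R^c_E$ and $R^c_\Box$ thanks to (KE) and (K$\Box$)---followed by a Lemmon-style transitive filtration for $\Box$ and the minimal filtration for $E$ to obtain the finite model property. The only step I would make explicit is the right-to-left direction of the filtration clause for $K$ (if $\psi\in w'$ for every $w'$ in the cluster then $K\psi\in w$), which is the standard existence lemma for $\hat{K}$; with that noted, the argument is complete.
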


We now construct an evidence interaction model $\cM_M=\IModel$ from a finite relational evidence and knowability model $M=(X, R_E, R_\Box, v)$ in such a way that $M$ and $\cM_M$ are point-wise modally equivalent with respect to $\cL_{E, K, \Box}$. While finiteness of the model is not essential, it will simplify our construction. 

Let $M=(X, R_E, R_\Box, v)$ be a finite relational evidence and knowability model and denote the set of all upsets of the preordered set $(X, R_\Box)$ by $Up{(X)}$. Since $X$ is finite, $Up{(X)}$ is finite. Therefore, we can enumerate the elements of $Up{(X)}$ and write $Up{(X)}=\{E_1, \dots, E_n\}$. Note that $X\in Up{(X)}$, so, wlog, we let $X=E_n$. For each element in $Up{(X)}$, we put a corresponding element in $\cE=\{e_1, \dots,  e_n\}$ (so $e_n$ is the evidence state corresponding to $X$---this will become clearer below). We then define an evidence parthood relation $\leq$ on $\cE$ as
$$\forall e_i, e_j\in \cE(e_i\leq e_j \mbox { iff } E_i\subseteq E_j).$$
It is easy to see that $(\cE, \leq)$ is a poset with the top element $e_n$, that is, $e_j\leq e_n$ for all $e_j\in \cE$. We can  define the corresponding  meet, $e_i\oplus e_j$, in a standard way as the greatest lower bound of $\{e_i, e_j\}$ with respect to $\leq$. More generally, for any finite $\cE'\subseteq \cE$, the element $\oplus\cE'$ is the greatest lower bound of $\cE'$ with respect to $\leq$ (see, e.g., \cite{Davey90} for a general introduction to lattice theory).  Finally, for all $x\in X$ and $e_i\in \cE$, set $I_{e_i}(x)=E_i\cap R_E(x)$. Notice that, as $E_n=X$, we have $I_{e_n}(x)=R_E(x)$.

 \commentout{
  \begin{lemma}\label{lem:capcup}
Let $e_j=\{e_i, \dots, e_k\}$ be a finite subset of $\cE$ and $e_j\in \cE$. Then, $e_j=\oplus \cE'$ implies $E_j=E_i\cap \dots \cap E_k$.
 \end{lemma}
 \begin{proof}
 Suppose $e_j=\oplus \cE'$.  We then show that $E_j$ is the greatest lower bound of $\{E_i, \dots, E_k\}$ with respect to $\subseteq$, which is known to be equivalent to $E_i\cap \dots \cap E_k$. We first show that $E_j$ is a lower bound of $\{E_i, \dots, E_k\}$. The assumption $e_j=\oplus \cE'$ implies that  $e_j \leq e_m$ for all $e_m\in \cE'$ (that is, $e_j$ is a lower bound for $\cE'$). Then, by the definition of $\leq$, we have $E_j\subseteq E_m$ for all $m$ that occurs as an index in $\cE'$, therefore, $E_j\subseteq E_i\cap \dots \cap E_k$. Next, we show that $E_j$ is at the greates lower bound. Let $E\in Up(X)$ such that $E\subseteq E_m$ for all $m$ that occurs as an index in $\cE'$. As $E\in Up(X)$, there is $e\in \cE$ such that  $e\leq e_m$ for all $e_m \in \cE'$ (by the definition of $\leq$). But, the first assumption also implies that $e_j$ is the greatest lower bound of $\cE'$, therefore,  $e\leq e_j$. Again by the definition of $\leq$, we obtain  $E\subseteq E_j$.
\end{proof}
}

\commentout{
\begin{lemma}\label{lem:U}
For all $e_i\in \cE$, we have $U_{e_i}=E_i$.
\end{lemma}
\begin{proof}
\begin{align}
U_{e_i} & = \{y\in X : y\in I_{e_i}(y)\}\tag{by defn. of $U_{e_i}$}  \\
& = \{y\in X : y\in E_i \cap R_E(y)\}\tag{by defn. of $I_{e_i}$}\\
& =  \{y\in X : y\in E_i\}\tag{since $R_E$ is reflexive}\\
& = E_i\notag
\end{align}
\end{proof}
}

\begin{lemma}\label{lem:interaction model}
Given a finite relational evidence and knowability model $M=(X, R_E, R_\Box, v)$, the structure  $\cM_M=\IModel$ constructed in the above described way is an evidence interaction model.
\end{lemma}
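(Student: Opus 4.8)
The plan is to verify, one by one, the three defining clauses of an evidence interaction model for the tuple $\cM_M = \IModel$: that $(\cE, \oplus)$ is a meet-semilattice, that $(X, \cE, I, v)$ is an evidence model satisfying (E1), and that $I_{\oplus\cE'}(x) = \bigcap_{e \in \cE'} I_e(x)$ for every $x \in X$ and every finite nonempty $\cE' \subseteq \cE$. The backbone of the argument is the observation that, by the very definition of $\leq$, the assignment $e_i \mapsto E_i$ is an order-isomorphism between $(\cE, \leq)$ and $(Up(X), \subseteq)$; hence structural facts about the lattice of upsets transfer directly to $\cE$, and the poset/top-element structure already noted in the text upgrades to a semilattice.

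For the meet-semilattice clause, the key step is to show that $(Up(X), \subseteq)$ is closed under finite intersections: if $U_1, U_2$ are upsets of the preorder $(X, R_\Box)$ and $y \in U_1 \cap U_2$ with $y R_\Box z$, then $z$ lies in both $U_1$ and $U_2$, hence in $U_1 \cap U_2$, so $U_1 \cap U_2 \in Up(X)$. Since intersection is the greatest lower bound under $\subseteq$, $(Up(X), \subseteq)$ is a meet-semilattice with meet given by $\cap$. Transporting along the isomorphism, every finite nonempty subset of $\cE$ has a glb, so $\oplus$ is well-defined and $(\cE, \oplus)$ is a meet-semilattice; moreover, writing $\oplus\cE' = e_k$, we obtain $E_k = \bigcap_{e_i \in \cE'} E_i$. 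This last identity---essentially the content of the intersection lemma sketched above---is what the interaction clause rests on, and making it precise is the main (though routine) obstacle.

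It then remains to check the evidence-model conditions and the interaction clause. That $(X, \cE, I, v)$ is an evidence model is immediate: $X$ is nonempty, $\cE \ni e_n$ is nonempty, each $I_{e_i}(x) = E_i \cap R_E(x)$ is a subset of $X$, and $v$ is inherited from $M$. For (E1), suppose $y \in I_{e_i}(x) = E_i \cap R_E(x)$; then $y \in E_i$, and since $R_E$ is reflexive we have $y \in R_E(y)$, whence $y \in E_i \cap R_E(y) = I_{e_i}(y)$, as required. Finally, for the interaction clause, fix $x$ and a finite nonempty $\cE'$ with $\oplus\cE' = e_k$, and use $E_k = \bigcap_{e_i \in \cE'} E_i$ together with the distribution of $R_E(x)$ over intersections:
\[
I_{\oplus\cE'}(x) = E_k \cap R_E(x) = \Big(\bigcap_{e_i \in \cE'} E_i\Big) \cap R_E(x) = \bigcap_{e_i \in \cE'}\big(E_i \cap R_E(x)\big) = \bigcap_{e \in \cE'} I_e(x).
\]
Here $\cE'$ is taken nonempty, matching the reading of $\oplus$ as an iterated binary meet; with this understood, all three clauses hold and $\cM_M$ is an evidence interaction model.
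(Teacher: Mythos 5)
Your proof is correct and follows essentially the same route as the paper's: verify the three clauses directly, with (E1) and the interaction identity handled by exactly the same computations, and the key fact $E_k=\bigcap_{e_i\in\cE'}E_i$ obtained from the order correspondence $e_i\mapsto E_i$ with $(Up(X),\subseteq)$ (the paper isolates this as an auxiliary lemma, whereas you derive it from closure of upsets under intersection plus the order-isomorphism). Your explicit remark that $\cE'$ should be nonempty is a reasonable precisification the paper leaves implicit.
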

\commentout{
\begin{proof} \
\begin{enumerate}
\item $(\cE, \oplus)$ is a meet-semilattice: by the definitions of $\oplus$.

\item (E1) is satisfied:  let $e_i\in \cE$, and $x, y\in X$ such that $y\in I_{e_i}(x)$. Since $I_{e_i}(x)=E_i\cap R_E(x)$, this in particular implies that $y\in E_i$. Then, since $R_E$ is reflexive, we obtain that $y\in E_i\cap R_E(y)=I_{e_i}(y)$.

\item for all $x\in X$ and finite $\cE'\subseteq \cE$, $I_{\oplus\cE'}(x)=\bigcap_{e\in \cE'}I_e(x)$:  let $x\in X$ and $\cE'=\{e_i, \dots, e_k\}\subseteq \cE$. Since $(\cE, \oplus)$ is a meet-semilattice, we have $\oplus\cE'\in \cE$, thus, $\oplus\cE'=e_j$ for some $j$ with $1\leq j \leq n$. We then obtain
\begin{align}
I_{\oplus\cE'}(x) & = I_{e_j}(x)\notag\\
& =E_j\cap R_E(x) \notag \\
& = (E_i\cap \dots \cap E_k) \cap R_E(x) \tag{Lemma \ref{lem:capcup}}\\
& = (E_i\cap R_E(x))  \cap \dots \cap (E_k \cap R_E(x))\notag\\
& =  I_{e_i}(x) \cap \dots \cap I_{e_k}(x)\tag{by defn. of $I_{e_i}$}\\\
& = \bigcap_{e\in \cE'}I_e(x)\notag
\end{align}
\end{enumerate}
\end{proof}}

\begin{lemma}\label{lem:modal.equiv4}
Let $M=(X, R_E, R_\Box, v)$ be a finite relational evidence and knowability model. Then,  for all $\varphi\in \L_{E,K,\Box}$  and $x\in X$, we have 
$$M, x\models \varphi \mbox{ iff } (\cM_M, x, e_n)\models \varphi.$$
\end{lemma}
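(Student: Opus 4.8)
The plan is to prove Lemma \ref{lem:modal.equiv4} by induction on the structure of $\varphi$, exactly paralleling the strategy used in Lemmas \ref{lem:modal.equiv1} and \ref{lem:modal.equiv2}. The base case (primitive propositions) and the Boolean cases are immediate, since the valuation $v$ is shared between $M$ and $\cM_M$ and truth of these formulas depends only on the state. The cases for $E$ and $K$ go through just as in Lemma \ref{lem:modal.equiv1}, using the two key facts recorded in the construction: first, that $I_{e_n}(x) = R_E(x)$ (since $E_n = X$ forces $I_{e_n}(x) = E_n \cap R_E(x) = R_E(x)$), so evidence entailment in $\cM_M$ at $e_n$ coincides with $R_E$-box; and second, that $U_{e_n} = E_n = X$ (this is the content of the commented-out Lemma \ref{lem:U}), so the $K$-clause $U_{e_n} \subseteq \val{\varphi}^{e_n}$ reduces to $X = \sval{\varphi}_M$, matching the relational universal modality. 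The inductive hypothesis that gives $\sval{\varphi}_M = \val{\varphi}^{e_n}$ is what makes both of these translations work.

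The genuinely new case is $\Box\varphi$, and this is where the main obstacle lies: I must show that the relational reading $R_\Box(x) \subseteq \sval{\varphi}_M$ is equivalent to the evidence-interaction reading $\exists e' \in \cE\,(x \in U_{e_n \oplus e'} \subseteq \val{\varphi}^{e_n})$. The crucial bridge is the bijection between upsets $E_i$ of $(X, R_\Box)$ and evidence states $e_i$, together with the facts $U_{e_i} = E_i$ and $e_n \oplus e' = e'$ (since $e_n$ is the top element, so meeting with it does nothing). Thus the existential over $e'$ ranges exactly over all upsets, and the condition becomes: there is an upset $E_i$ with $x \in E_i \subseteq \val{\varphi}^{e_n}$. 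For the forward direction, assuming $R_\Box(x) \subseteq \sval{\varphi}_M = \val{\varphi}^{e_n}$, I take $E_i = R_\Box(x)$, the principal upset generated by $x$; it is an $R_\Box$-upset by transitivity, it contains $x$ by reflexivity, and it is contained in the truth set by assumption, so the witness $e'$ with $E_{e'} = R_\Box(x)$ works. For the backward direction, given an upset $E_i$ with $x \in E_i \subseteq \val{\varphi}^{e_n}$, reflexivity and the upset property give $R_\Box(x) \subseteq E_i \subseteq \sval{\varphi}_M$, which is exactly the relational clause.

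The one point requiring care is that the equivalence rests on the auxiliary facts $U_{e_i} = E_i$ (Lemma \ref{lem:U}) and $I_{e_n}(x) = R_E(x)$, and on the observation that $e_n$ is the semilattice top so that $U_{e_n \oplus e'} = U_{e'} = E_{e'}$ ranges over all of $Up(X)$ as $e'$ ranges over $\cE$. I would state these facts explicitly before entering the $\Box$ case so that the existential quantifier over evidence states can be cleanly identified with an existential quantifier over $R_\Box$-upsets. Once that identification is made, the proof is just the standard correspondence between the principal-upset semantics for an $\mathsf{S4}$ box and the Alexandrov interior operator: the witnessing upset in the forward direction is always the principal upset $R_\Box(x)$, which is the smallest candidate and hence the ``tightest'' piece of additional evidence the agent could obtain.
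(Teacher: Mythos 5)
Your proposal is correct and follows essentially the same route as the paper's own proof: induction on $\varphi$ with the $E$ and $K$ cases handled as in Lemma \ref{lem:modal.equiv1}, and the $\Box$ case resolved via the correspondence $U_{e_i}=E_i$ between evidence states and $R_\Box$-upsets, taking the principal upset $R_\Box(x)$ as the witness in the forward direction and using that $R_\Box(x)$ is the smallest upset containing $x$ in the backward direction. The only cosmetic difference is that you simplify $e_n\oplus e'$ to $e'$ outright using that $e_n$ is the top element, whereas the paper computes $U_{e_n\oplus e_i}=U_{e_i}\cap U_{e_n}=U_{e_i}$; these are equivalent.
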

\commentout{
\begin{proof} 
The proof follows by induction on the structure of $\varphi$ similarly to  the proof of Theorem \ref{lem:modal.equiv1}, where cases for the primitive propositions, the Boolean connectives, $E\varphi$, and $K\varphi$ are presented. So assume inductively that the result holds for $\varphi$; we
show that it holds also for $\Box\varphi$. Note that the inductive hypothesis implies that $\|\varphi\|_M = \val{\varphi}^{e_n}$.

Case $\Box\varphi$:\\
($\Rightarrow$) Suppose $M, x\models \Box\varphi$, i.e, $R_\Box(x)\subseteq \sval{\varphi}_M$. Since $R_\Box(x)\in Up{(X)}$, thus, $R_\Box(x) =E_i$ for some $i$ with $1\leq i\leq n$, there is $e_i\in \cE$ such that $U_{e_i}=R_\Box(x)$ (by Lemma \ref{lem:U}).  Moreover, since $U_{e_n}=E_n=X$, we obtain $U_{e_i} = U_{e_i}\cap U_{e_n}= U_{e_n\oplus e_i}$. Therefore, by IH, we have $x\in U_{e_n\oplus e_i}\subseteq \val{\varphi}^{e_n}$, i.e., $(\cM_M, x, e_n)\models \Box\varphi$.\\
($\Leftarrow$) Suppose $(\cM_M, x, e_n)\models \Box\varphi$. This means that there is $e_i\in \cE$ such that $x\in U_{e_n\oplus e_i}\subseteq  \val{\varphi}^{e_n}$. Since $(\cE, \oplus)$ is a meet-semilattice, $e_n\oplus e_i\in \cE$. Then, by Lemma \ref{lem:U}, $U_{e_n\oplus e_i}\in Up{(X)}$. Since $R_\Box(x)$ is the smallest upset containing $x$, we have $R_\Box(x) \subseteq U_{e_n\oplus e_i}$. Therefore, by IH, we obtain that $R_\Box(x)\subseteq  \sval{\varphi}_M$, i.e., $M, x\models \Box\varphi$.
\end{proof}}

\begin{corollary}\label{cor:comp:evi4}
$\EKK$ is a complete axiomatization of $\L_{E,K,\Box}$ with respect to the class of evidence interaction models.\end{corollary}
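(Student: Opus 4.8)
The plan is to replay the template already used for Corollaries \ref{cor:comp:evi1} and \ref{cor:comp:evi2}, now reducing completeness over evidence interaction models to the finite relational completeness secured in Theorem \ref{thm:comp:reln4}. Since soundness is already Theorem \ref{thm:reln.comp2}, only completeness remains, and I would argue it contrapositively by transferring a relational countermodel across the construction $M \mapsto \cM_M$.

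First I would take any $\phi \in \L_{E,K,\Box}$ with $\not\proves_{\EKK} \phi$. By Theorem \ref{thm:comp:reln4} there is a finite relational evidence and knowability model $M = (X, R_E, R_\Box, v)$ and a point $x \in X$ at which $\phi$ fails. Next I would pass to the associated structure $\cM_M = \IModel$ built above; Lemma \ref{lem:interaction model} guarantees that $\cM_M$ is a genuine evidence interaction model. Because $E_n = X$ and $R_E$ is reflexive, we have $I_{e_n}(y) = R_E(y) \ni y$ for every $y$, whence $U_{e_n} = X$; in particular $x \in U_{e_n}$, so $(x, e_n)$ is an evidence scenario of $\cM_M$. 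Finally, Lemma \ref{lem:modal.equiv4} transfers the refutation: since $(M,x) \not\models \phi$, it follows that $(\cM_M, x, e_n) \not\models \phi$, exhibiting a model in the target class on which $\phi$ is not valid. Combined with the soundness direction (Theorem \ref{thm:reln.comp2}), this establishes the soundness-and-completeness claim.

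At the level of the corollary itself there is essentially no obstacle: it is a mechanical composition of Theorem \ref{thm:comp:reln4}, Lemma \ref{lem:interaction model}, and Lemma \ref{lem:modal.equiv4}, exactly mirroring the earlier corollaries. The genuine difficulty lives one level down, in the $\Box$-clause of Lemma \ref{lem:modal.equiv4}, which I am entitled to invoke but which carries the real content. There the relational reading of $\Box$ (universal quantification over $R_\Box(x)$) must be matched with the evidence-interaction reading (an existential over combinable evidence $e'$ yielding a coherence set $U_{e_n \oplus e'}$ that sandwiches $x$ inside $\val{\phi}^{e_n}$). The facts making this match work are that the coherence sets $U_{e_i}$ range exactly over the upsets of $(X, R_\Box)$, that the meet $\oplus$ realizes intersection of upsets, and crucially that $R_\Box(x)$ is the \emph{smallest} upset containing $x$---so the optimal witnessing evidence corresponds precisely to $R_\Box(x)$, forcing the two quantifier patterns to agree. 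I would therefore regard verifying this correspondence (together with the finiteness used to ensure the set of upsets of $(X,R_\Box)$, and hence $\cE$, is well-defined and finite) as the step warranting the most care, even though it has already been discharged in the lemmas assembled here.
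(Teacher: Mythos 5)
Your proposal is correct and follows exactly the paper's own route: contrapositively obtain a finite relational countermodel via Theorem \ref{thm:comp:reln4}, pass to $\cM_M$ using Lemma \ref{lem:interaction model}, and transfer the refutation to the scenario $(x,e_n)$ via Lemma \ref{lem:modal.equiv4}, just as in Corollary \ref{cor:comp:evi1}. Your closing remarks correctly locate the substantive work in the $\Box$-clause of Lemma \ref{lem:modal.equiv4}, but that is already discharged there, so nothing further is needed.
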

\begin{proof}
Similar to the proof of Corollary \ref{cor:comp:evi1}, by Theorem \ref{thm:comp:reln4} and Lemma \ref{lem:modal.equiv4}.
\end{proof}

\begin{theorem}\label{thm:comp.S4}
$\mathsf{S4}_\Box$ is a sound and complete axiomatization of $\cL_\Box$ with respect to the class of evidence interaction models.
\end{theorem}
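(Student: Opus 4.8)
The plan is to obtain this as an essentially immediate consequence of the classical completeness theorem for $\mathsf{S4}$ together with the construction already developed for the full language $\cL_{E,K,\Box}$. Soundness requires no new work: $\mathsf{S4}_\Box = \mathsf{KT}_\Box + (4_\Box)$ (with $\mathrm{Nec}_\Box$) is precisely the $\Box$-fragment of $\EKK$, and the validity of each of $(K_\Box)$, $(T_\Box)$, $(4_\Box)$ and $\mathrm{Nec}_\Box$ on every evidence interaction model was already established in the soundness argument for Theorem \ref{thm:reln.comp2}. So I would focus entirely on completeness, i.e.\ on showing that any $\phi \in \cL_\Box$ with $\not\vdash_{\mathsf{S4}_\Box}\phi$ is refuted on some evidence interaction model.

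First I would invoke the finite model property of $\mathsf{S4}$: if $\not\vdash_{\mathsf{S4}_\Box}\phi$, then there is a finite reflexive and transitive Kripke model $(X, R_\Box, v)$ and a point $x \in X$ at which $\phi$ fails under the standard Kripke semantics for $\Box$ (see, e.g., \cite{BdRV01}). Such a structure is almost exactly a finite relational evidence and knowability model in the sense of Section \ref{section:evidence-knowability}; the only missing ingredient is the reflexive relation $R_E$. Since $\phi$ contains neither $E$ nor $K$, its truth value does not depend on $R_E$, so I would simply fix an arbitrary reflexive relation on $X$---for instance $R_E = X \times X$---to obtain a finite relational evidence and knowability model $M = (X, R_E, R_\Box, v)$ at which $\phi$ still fails at $x$, because $\Box$ is interpreted in $M$ via the same relation $R_\Box$ with the same semantic clause.

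Next I would apply the construction preceding Lemma \ref{lem:modal.equiv4}: from $M$ I form the evidence interaction model $\cM_M = \IModel$ guaranteed by Lemma \ref{lem:interaction model}. Lemma \ref{lem:modal.equiv4} gives the point-wise modal equivalence $M, x \models \psi$ iff $(\cM_M, x, e_n) \models \psi$ for every $\psi \in \cL_{E,K,\Box}$, and since $\cL_\Box \subseteq \cL_{E,K,\Box}$ this applies in particular to $\phi$. Hence $(\cM_M, x, e_n) \not\models \phi$, so $\phi$ is not valid on the class of evidence interaction models; contraposing yields completeness.

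I do not anticipate any serious obstacle, since the substantive content is carried by the already-established Lemmas \ref{lem:interaction model} and \ref{lem:modal.equiv4}. The one point deserving explicit mention is the claim that the choice of $R_E$ is irrelevant for $\Box$-formulas: for each $e_i \in \cE$ one has $U_{e_i} = \{y \in X : y \in E_i \cap R_E(y)\} = E_i$ by reflexivity of $R_E$, \emph{independently} of which reflexive $R_E$ is chosen, and the semantic clause for $\Box$ refers only to the sets $U_{e \oplus e'}$. Thus the knowability structure of $\cM_M$ is determined entirely by $R_\Box$ (through the upsets $E_i$ and the induced semilattice $(\cE,\oplus)$), which is exactly what licenses transporting the full-language equivalence of Lemma \ref{lem:modal.equiv4} to the uni-modal fragment.
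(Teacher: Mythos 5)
Your proof is correct and follows essentially the same route as the paper: both rest on the finite model property of $\mathsf{S4}_\Box$ and the upset-based construction of an evidence interaction model from the finite reflexive--transitive countermodel. The paper simply sets $I_{e_i}(x)=E_i$ directly rather than padding the Kripke model with a dummy reflexive $R_E$; since your choice $R_E=X\times X$ gives $E_i\cap R_E(x)=E_i$, the two constructions produce the very same model, and your observation that the $\Box$-clause depends only on the sets $U_{e\oplus e'}=E_i\cap E_j$ is exactly the point that makes either variant work.
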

\commentout{
\begin{proof}
Soundness is given in Theorem \ref{thm:reln.comp2}. The completeness proof follows similarly to the proof of Corollary \ref{cor:comp:evi4} and uses the fact that $\mathsf{S4}_\Box$ is a sound and complete axiomatization of $\cL_\Box$ with respect to the class of finite reflexive and transitive models of the forms $(X, R_\Box, v)$. Given a finite model of the form $(X, R_\Box, v)$, we construct the corresponding evidence interaction model $\cM_M=\IModel$ as explained earlier, except that  we set for all $x\in X$ and $e_i\in \cE$, $I_{e_i}(x)=E_i$.
\end{proof}}

\section{Further Work} \label{section:conclusion}


We introduced \emph{evidence models} as a means of representing agents who may be uncertain about what their evidence actually entails. 
We also explored
some extensions of this framework that
include belief and knowability.
%
%
There are many interesting avenues to continue this line of work.
%
From a philosophical angle, we believe the framework we have developed here is well-suited to the analysis of a variety of conceptual puzzles that arise when less flexible models of evidence entailment are implicitly relied upon, while on the more mathematical side, it is clear that the logical systems we defined have a variety of natural extensions.

In Section \ref{section:evidence-belief}, for example, we outlined a way of using evidence interpretations to extend an agent's initial conjecture to a graded notion of belief/plausibility. And in Section \ref{section:evidence-knowability}, the account of knowability we provided only scratched the surface of the potential for developing fully dynamic logics atop this foundation.
Consider a public announcement style update mechanic in which knowability plays the role of the precondition of the corresponding announcement, as in \cite{Bjorndahl16}. 
The effect of an announcement is then manifested as a transition from the initial evidence state to a more
informative one, without
requiring
global changes in the given model, as in logics of information dynamics interpreted on  subset space models \cite{WA13b,BO17,vDKO15b,BOVS17,BGOVS18}. The enriched structure owing to the evidence states and their variable interpretations raises the question of whether such a dynamic logic can be reduced to a weaker, static logic, as is often the case in similar settings.  

When we view
evidence models as a generalization of subset space models (recall Observation \ref{obs:ssm}), another natural dynamic extension
suggests itself:
adding the so-called {\em effort modality}, the trademark of subset space logics.  The effort modality, denoted here by
$\diaast \varphi$,
is intended to capture a notion of ``epistemic effort'', such as taking further measurements, and
might be read in the present context as
``$\varphi$ becomes true after some further evidence intake''.
It can then be naturally interpreted on evidence interaction models as
$$(x, e)\models  \diaast \varphi \mbox{ iff } (\exists e'\in \cE)(x\in U_{e\oplus e'} \mbox{ and } (x, e\oplus e')\models \varphi).$$ 

\noindent Incorporating such an operator in the current setting would provide a formal framework in which we could study a truly dynamic notion of knowability via the scheme $\diaast K\varphi$, as opposed to its static counterpart $\Box\varphi$ (see also \cite{Bjorndahl16} for a discussion of the same issue in topological subset space semantics). 
Moreover, the relationship between
$\diaast$
and our static modalities $K, B, E$, and $\Box$ could help further the research on dynamic logics for topological formal learning theory \cite{Kelly96}, initiated by \cite{BGS11,Baltag:2015uo} and further developed within subset space style logics in \cite{BGOVS18}. Such investigations are the subject of ongoing research.

\section*{Acknowledgements}\label{sec:Acknowledgements}
We thank the anonymous reviewers of TARK 2019 for their valuable comments. Ayb\"uke \"Ozg\"un's research was funded by the European Research Council (ERC CoG), Consolidator grant no.~681404, `The Logic of Conceivability'.

\commentout{
We now move on to observation dynamics. There are various types of evidence dynamics one can study in this framework. We first have a look at what we thought to be the closest relative of the observation dynamics  studied on topological subset spaces in \draft{ref}, where knowability interpreted as the topological interior operator plays the role of the precondition of the corresponding information update. Here we do the same with the generalized interior operator $\Box\varphi$. 
We extend the language $\cL_{E, K, \Box}$ with a dynamic operator of the form $[!\varphi]\psi$ that is read as ``after observing $\varphi$, $\psi$ becomes true''. The new dynamic language is denoted by  $\cL^!_{E, K, \Box}$.  We interpret $[!\varphi]\psi$ on evidence interaction models as 
$$(x,e) \models [!\phi]\psi \; \dimp \; x\in U_{e^\varphi} \Rightarrow (x, e^\varphi)\models \psi.$$

\begin{table}[htp]
\begin{center}
\begin{tabularx}{\textwidth}{>{\hsize=.6\hsize}X>{\hsize=1.3\hsize}X>{\hsize=1.1\hsize}X}
\toprule
(R$_p$) & $[!\varphi]p \leftrightarrow (\Box\varphi\rightarrow p)$ & \ \\
(R$_\neg$) & $[!\varphi]\neg \psi \leftrightarrow (\Box\varphi\rightarrow \neg [!\varphi]\psi)$ & \
 \\
(R$_\wedge$) & $[!\varphi](\psi \wedge \chi) \leftrightarrow ([!\varphi]\psi \wedge [!\varphi]\chi)$ & \ \\
(R$_K$)  &   $[!\varphi]K \psi \leftrightarrow (\Box\varphi\rightarrow K [!\varphi]\psi)$ & \ \\
(R$_\Box$)  &   $[!\varphi]\Box \psi \leftrightarrow (\Box\varphi\rightarrow \Box [!\varphi]\psi)$ & \ \\
(R$_E$)  & $ [!\varphi] E\psi  \leftrightarrow (\Box\varphi\rightarrow ????) $& \ \\
\bottomrule
\end{tabularx}
\end{center}
\caption{Reduction Axioms}\label{tbl:stl}
\end{table}%

``Proof'' for $E$:

\begin{align}
(x, e)\models  [!\varphi] E\psi & \mbox{ iff }  x\in U_{e^\varphi} \Rightarrow (x, e^\varphi)\models E\psi\tag{semantics of $[!\varphi]\psi$}\\
& \mbox{ iff } (x, e)\models \Box\varphi \Rightarrow (x, e^\varphi)\models E\psi\tag{semantics of $\Box\varphi$}\\
& \mbox{ iff } (x, e)\models \Box\varphi \Rightarrow  I_{e^\varphi}(x)\subseteq \val{\psi}^{e^\varphi} \tag{semantics of $E\varphi$}\\
& \mbox{ iff } (x, e)\models \Box\varphi \Rightarrow  I_{e^\varphi}(x)\subseteq \val{\langle !\varphi\rangle \psi}^e\tag{semantics of $[!\varphi]\psi$}
\end{align}
From here on Ayb\"uke is stuck. She was playing with the option $I_e(x)\cap  U_{e^\varphi} =  I_{e^\varphi}(x)$, which would give us the most straightforward reduction axiom $[!\varphi]E \psi \leftrightarrow (\Box\varphi\rightarrow E [!\varphi]\psi)$ but she could not get it. Maybe you see a different way out?
}

\bibliographystyle{eptcs}
\bibliography{Ref}

\end{document}